\documentclass[journal]{IEEEtran}
\usepackage{amssymb}
\usepackage{amsthm}
\usepackage{graphicx}
\usepackage{algorithm}
\usepackage{algorithmic}
\usepackage{cite}
\usepackage{bm}
\usepackage{float}
\usepackage{multirow}
\usepackage{color}
\usepackage{subfigure}
\usepackage{caption}
\usepackage{epstopdf}
\usepackage{hyperref}
\usepackage{fixltx2e}
\usepackage{longtable}
\usepackage{diagbox}
\usepackage{changepage}
\usepackage{amsmath, amsfonts}
\usepackage{dsfont}
\usepackage{makecell}
\usepackage{optidef}
\usepackage{subcaption}
\theoremstyle{definition}

\theoremstyle{remark}
\newtheorem{remark}{Remark}
\theoremstyle{plain}
\newtheorem{theorem}{Theorem}
\newtheorem{lemma}{Lemma}

\newtheorem{corollary}{Corollary}

\begin{document}
\title{Joint Secrecy and Covert Communication (JSACC): An Enhanced Physical Layer Security Approach}

\author{Yanyu Cheng,
Haitao Du,
Liqin Hu,
Wei Yang Bryan Lim,
and~Pan Li,~\IEEEmembership{Fellow,~IEEE}

\thanks{
Yanyu Cheng is with the School of Electrical and Electronic Engineering, Nanyang Technological University, Singapore 639798 (e-mail: ycheng022@e.ntu.edu.sg).

Haitao~Du and Liqin~Hu are with the School of Cyberspace, Hangzhou Dianzi University, Hangzhou 310018, China (e-mail: \{242270031; huliqin\}@hdu.edu.cn).

Wei Yang Bryan Lim is with the College of Computing and Data Science, Nanyang Technological University, Singapore 639798 (e-mail: bryan.limwy@ntu.edu.sg).

Pan Li is with the School of Computer Science, Hangzhou Dianzi University, Hangzhou 310018, China (e-mail: lipan@ieee.org).
}}

\maketitle

\begin{abstract}
In this paper, we propose an enhanced physical layer security approach, named joint secrecy and covert communication (JSACC), which aims to improve the performance of physical layer security (PLS). 
The JSACC system can dynamically switch between secrecy mode and covert mode according to the channel difference between legitimate and illegitimate receivers. 
We further leverage reconfigurable intelligent surface (RIS) to extend the communication range.
For each scenario, we derive the closed-form expressions for the outage probability (OP) and ergodic rate (ER).
To further understand system performance, we derive asymptotic approximations in the high signal-to-noise ratio (SNR) regime to obtain the diversity order and high-SNR slope.
We demonstrate that the diversity order of the JSACC depends on Nakagami fading parameters and the RIS reflecting element number. 
Simulation results are consistent with our theoretical analysis and reveal the superiority of the JSACC system over the conventional secrecy communication (SC) system.
\end{abstract}

\begin{IEEEkeywords}
Covert communication, physical layer security, secrecy communication.
\end{IEEEkeywords}

\IEEEpeerreviewmaketitle

\section{Introduction}\label{section-1}
With the rapid advancement of wireless communication technologies, physical layer security (PLS) has emerged as a fundamental paradigm for safeguarding data transmission against eavesdropping and malicious attacks by exploiting the inherent characteristics of wireless channels \cite{xiang2020noma}. 
Unlike conventional cryptographic approaches that rely on computational complexity, PLS can provide information-theoretic security guarantees, making it particularly suitable for next-generation wireless systems with stringent latency and reliability requirements. 
PLS primarily focuses on secrecy communication (SC), aiming to ensure the confidentiality of information transmitted from a legitimate transmitter to its intended receiver. 
The broadcast nature of wireless channels, however, makes PLS vulnerable to sophisticated eavesdropping techniques, including signal interception, relay attacks, and eavesdroppers equipped with advanced computational capabilities. These challenges underscore the need for more robust and adaptable security strategies that can operate effectively in complex and dynamic wireless environments, paving the way for a closer examination of SC.

SC provides strong information-theoretic security guarantees by exploiting the advantage of a legitimate receiver's channel over that of potential eavesdroppers. Techniques such as artificial noise (AN) generation, beamforming, and advanced channel coding are commonly employed to enhance the secrecy capacity and prevent unauthorized interception \cite{xiang2019physical}. SC has been widely recognized for its ability to secure critical wireless communications, including those in IoT, autonomous systems, and 5G networks. Nevertheless, its practical effectiveness is heavily dependent on favorable channel conditions, where the main channel consistently outperforms the eavesdropper's channel. In real-world scenarios, especially in dense urban deployments or highly mobile networks, the primary and eavesdropper channels may exhibit comparable or even superior qualities at the eavesdropper's side, causing the achievable secrecy capacity to degrade significantly \cite{gopala2008secrecy, liang2008secure}. Factors such as multipath propagation, rich scattering, and high mobility further exacerbate these limitations, making SC insufficient for guaranteeing confidentiality in many practical settings. These inherent constraints motivate the exploration of alternative paradigms that focus on stealth rather than channel superiority, leading naturally to covert communication (CC) \cite{cheng2025performance}.

CC aims to address the limitations of SC by minimizing the probability of detection by adversaries rather than solely focusing on maximizing secrecy capacity. 
By transmitting signals below the noise floor or leveraging channel uncertainty, CC can significantly reduce the risk of interception, making it particularly attractive for scenarios requiring undetectable communication, such as military networks, vehicular systems, and critical IoT applications \cite{bash2015hiding}. 
However, CC introduces intrinsic tradeoffs, achieving low detection probability often comes at the expense of data rate, communication range, and overall spectral efficiency. 
In dynamic networks, such as mobile ad hoc or vehicular networks, where channel conditions and network topology change rapidly, maintaining both reliability and undetectability becomes even more challenging \cite{shahzad2020covert}. 
Moreover, environmental factors such as interference, fading, and noise variability further complicate the design of effective CC systems. 
These challenges underscore the need for adaptive security mechanisms that can jointly balance covertness, reliability, and efficiency, ensuring secure communication in next-generation wireless networks with diverse and dynamic requirements \cite{cheng2022federated}.

\subsection{Motivation and Contributions} 
SC and CC have their own advantages and limitations \cite{10843324,10923651}. 
SC primarily aims to ensure the confidentiality of transmitted data, protecting it from potential eavesdroppers. 
However, its effectiveness diminishes when the legitimate receiver's channel and the eavesdropper's channel have comparable quality, leading to a reduced secrecy capacity. 
On the other hand, CC focuses on minimizing the probability of detection by adversaries, but often at the cost of communication range and data rate. 
Motivated by their respective shortcomings, we introduce a joint secrecy and covert communication (JSACC) paradigm, designed to combine the strengths of both SC and CC. 
By dynamically switching between secrecy and covert modes based on real-time channel conditions, JSACC offers improved spectral efficiency and enhanced security performance. 
This approach not only optimally allocates resources across varying channel environments but also adapts the system's security and communication performance according to the changing threat models, thereby achieving higher adaptability and robustness in dynamic environments.
This hybrid approach is particularly beneficial in applications where both confidentiality and undetectability are critical, such as in low-altitude economy (LAE) networks and satellite communications \cite{du2022performance}.
The contributions can be summarized as follows.
\begin{itemize}
\item We propose a novel PLS mechanism that optimally switches between SC and CC, based on the channel difference between legitimate and illegitimate receivers.
\item By conducting a comprehensive performance analysis of the JSACC and the conventional SC systems, we derive the closed-form expressions for the outage probability (OP) and ergodic rate (ER) of all cases.
\item To obtain the diversity order and high signal-to-noise ratio (SNR) slope, we further derive asymptotic expressions of the OP and ER in the high-SNR regime.
\item Simulation results show that the JSACC system has lower OP and higher ER than those of the conventional SC system, demonstrating that the performance of the JSACC system is superior to that of the conventional SC system.
\end{itemize}

\subsection{Organization}
The organization of the paper is as follows: 
Section~\ref{section-review} mainly reviews previous work in this field.
Section~\ref{section-2} details the system model and channel assumptions. 
Section~\ref{section-3} elaborates on the secrecy and covert communication mechanisms. 
Sections~\ref{section-4} and \ref{section-er} present the performance analysis of the fixed-rate systems and adaptive-rate systems, respectively. 
Section~\ref {section-5} presents the numerical results. 
Section~\ref{section-6} discusses the conclusions and future directions.

\section{Related Work}\label{section-review}
In this section, we introduce two fundamental technologies in our proposed framework, i.e., secrecy and covert communications.

\subsection{Secrecy Communication}
Under typical circumstances, traditional PLS refers to secrecy communication.
Secrecy communication focuses on ensuring the confidential transmission of information, even in the presence of adversaries trying to intercept or decode the messages. 
According to \cite{gopala2008secrecy}, the authors analyzed secrecy transmission over an ergodic fading channel, proposing a low-complexity power allocation scheme that achieves near-optimal secrecy capacity under partial channel state information (CSI), showing that channel fading enhances secrecy capacity.
In \cite{guan2020intelligent}, the authors analyzed the impact of incorporating AN in an RIS-aided wireless communication system and developed an efficient alternating optimization algorithm for secrecy rate maximization, showing that AN is beneficial, especially under conditions with an increasing number of eavesdroppers close to the RIS.
In \cite{sun2021intelligent}, the authors proposed a joint design of positions and beamforming for a Unmanned Aerial Vehicle (UAV) and RIS-aided network, aiming to maximize the secrecy capacity while exploiting AN to combat eavesdropping, and solved the non-convex problem using alternating optimization and semidefinite relaxation, showing significant performance gains.
In \cite{yang2020secrecy, lv2022ris}, the authors proposed an active RIS-aided secrecy transmission scheme to counteract the double fading effect, optimizing power and secrecy capacity, and showing better energy efficiency than passive RIS in secrecy communication.
In \cite{yao2019secrecy}, the authors analyzed secrecy communication in a UAV-enabled wireless network, where UAVs transmitted confidential information to legitimate receivers while avoiding eavesdroppers, optimizing system configurations to maximize secrecy capacity. 
They introduced a secrecy guard zone technique to protect communication further, demonstrating its effectiveness in enhancing secrecy capacity as UAV and eavesdropper densities increase.
In \cite{ghadi2024physical}, the authors analyzed PLS performance in fluid antenna systems under correlated fading channels, showing that fluid antennas provide more secure transmission than traditional antenna systems.

The evolution of secrecy communication has been significantly advanced by the integration of multiple-input multiple-output (MIMO) technologies and cooperative relay networks. 
In \cite{khisti2010secure}, the authors established fundamental limits for secure communication in MIMO wiretap channels, demonstrating that multiple antennas at both transmitter and receiver can substantially improve secrecy capacity compared to single-antenna systems. The work showed that proper precoding design is crucial for achieving optimal secrecy-reliability tradeoffs.
Cooperative jamming has emerged as another promising approach for enhancing PLS. In \cite{shah2019improving}, the authors investigated cooperative jamming strategies in wireless networks, where legitimate nodes transmit artificial noise to confuse potential eavesdroppers. The study revealed that the strategic placement of cooperative jammers can significantly enhance the network's secrecy capacity.
The deployment of machine learning methods for secrecy communication has garnered considerable interest recently.
In \cite{yang2020deep}, the authors proposed a deep reinforcement learning approach for optimizing beamforming and power allocation in secure wireless communications. Their method adaptively learns optimal transmission strategies without requiring perfect channel state information, achieving superior performance compared to conventional optimization techniques.

Existing research on secrecy communication improves confidentiality through techniques such as AN, RIS-aided transmission, and MIMO beamforming. 
However, its performance relies on the legitimate channel being stronger than the eavesdropper's channel, which is often not guaranteed in mobile or dense environments. 
Our JSACC addresses this limitation by switching to covert communication when secrecy capacity becomes insufficient, ensuring more reliable protection.

\subsection{Covert Communication}
In addition to conventional secure methods, covert communication has emerged as a promising approach to enhance wireless network security due to its inherent high-security level \cite{chen2023covert}.
Covert communication aims to transmit information in a way that is undetectable to adversaries, effectively hiding the communication in plain sight. 
In \cite{sobers2017covert}, the authors analyzed covert communication under the Additive White Gaussian Noise (AWGN) and block fading channels with the help of an uninformed Jammer, showing that Alice can transmit $O(n)$ bits covertly in $n$ channels without decreasing power, even when Willie used an optimal detector.
In \cite{wang2022covert}, the authors presented a covert communication aided by RIS to maximize the transmission rate, demonstrating the effectiveness of the RIS for covert communication.
In \cite{cheng2023performance}, the authors applied covert communication to mobile edge computing (MEC) in sixth-generation networks and showed that there was still much room for future development of covert communication.
In \cite{hu2018covert}, the authors analyzed the potential and performance limits of covert communication within amplify-and-forward one-way relay networks, proposing two transmission schemes, rate-control and power-control, and demonstrating that the relay's forwarding ability enhances the effective covert rate.
In \cite{jiang2024physical}, the authors provide an extensive overview of physical layer covert communication, discussing its fundamental theories, while identifying challenges and future research directions for covert communications in the beyond fifth-generation (B5G) wireless network.

In \cite{zhang2024covert}, the authors analyzed blockchain-based covert communication, proposed a reference model, identified five covert communication patterns, and discussed challenges and future directions for 
secure and cost-effective covert communication using blockchain.
In \cite{li2025covert}, the authors proposed a dual-function RIS for covert communications, optimizing power allocation to improve security performance and reduce phase errors, showing that more RIS elements enhance security.
In \cite{yan2017covert}, the authors investigated covert communication in wireless networks with finite blocklength, analyzing the impact of finite blocklength coding on covert communication performance. The study demonstrated that finite blocklength effects have a significant impact on the achievable covert rate and detection error probability.
In \cite{forouzesh2019covert}, the authors proposed covert communication schemes using artificial noise and beamforming techniques in MIMO systems. The work demonstrated that multiple antennas can be exploited to enhance both covert rate and anti-detection capability by carefully designing transmit beamforming vectors.
In \cite{wu2025covert}, the authors introduced ambient backscatter-assisted covert communication, where the covert transmitter harvests energy from ambient RF signals while simultaneously transmitting covert information. This approach provides natural camouflage for covert communication by mimicking the characteristics of backscatter signals.
In \cite{elsadig2022covert}, the authors applied machine learning techniques to covert communication, proposing deep neural network-based approaches for optimizing covert transmission parameters. The study showed that AI-driven adaptive strategies can significantly improve covert communication performance in dynamic environments.

Existing research on covert communication often overlooks the use of methods such as UAV assistance or machine learning–based adaptation for transmission. 
While effective in reducing detection probability, CC typically suffers from low data rates, limited coverage, and vulnerability to interference and fading. 
JSACC mitigates these drawbacks by activating CC only when secrecy fails, thereby balancing covertness with efficiency and reliability.

\subsection{Combination of SC and CC}
In modern communication systems, the combination of SC and CC offers a robust solution for achieving both confidentiality and undetectability. While SC ensures that the information remains secure from eavesdroppers by maintaining secrecy, CC focuses on hiding the very existence of the communication itself. 
By integrating these two approaches, it is possible not only to protect the transmitted data but also to make the communication undetectable, even in hostile environments. 
This combination is particularly relevant in scenarios where high security is required. 
Leveraging methods such as artificial noise, spread-spectrum techniques, and power control strategies, these systems enhance the resilience of communications, making it harder for adversaries to both intercept and detect transmissions. 

In \cite{wu2020achieving}, the authors presented a secrecy wireless communications paradigm that jointly ensures covertness and secrecy by applying PLS against both detection and eavesdropping attacks.
In \cite{forouzesh2020joint}, the authors investigated joint secrecy and covert communications in the system with two legitimate users, presenting an optimization framework and demonstrating that imperfect CSI significantly degrades performance.
In \cite{wang2019secrecy}, the authors presented a multi-hop relaying strategy to address security and covert communication challenges against UAV surveillance and demonstrate a tradeoff between secrecy/covertness and efficiency, revealing the existence of an optimal number of hops.
In \cite{huang2022distributed}, the authors presented a distributed transmission scheme combining rateless coding and protocol design to achieve both security and covertness against wardens and eavesdroppers and demonstrate that adjusting the covert frame ratio effectively counters inter-collusion attacks while maintaining robustness and low complexity.
In \cite{chen2024enhancing}, the authors proposed a secure covert communication scheme utilizing a multi-antenna transmitter and a UAV jammer, optimizing power to maximize secrecy capacity while ensuring error detection and limiting eavesdropping.
In \cite{lindelauf2009influence}, the authors analyzed optimal communication structures for covert organizations using multi-objective optimization and bargaining game theory, showing that the optimal structure depends on link detection probability and information hierarchy.

Existing works combining SC and CC explore multi-hop relaying, UAV, and distributed coding schemes. 
However, they lack adaptability to dynamic channels.
Our JSACC system overcomes this issue by introducing adaptive switching between SC and CC, maintaining both secrecy and covertness under diverse conditions.

\section{System Model}\label{section-2}
Consider a transmission system consisting of a legitimate transmitter (Alice) and a friendly jammer (Jammer), a receiver (Bob), and an RIS as shown in Fig~\ref{fig-model}.
Bob is a cell-edge user without a direct link to communicate with Alice.
Thereby, Bob needs help from the RIS to communicate with Alice.
A warden (Willie) seeks to determine whether Alice is transmitting to Bob.
Alice controls jammer to hinder Willie from detecting the transmitted information.
All nodes are assumed to have a single antenna, and the RIS consists of $N$ reflecting elements, and the reflection-coefficient matrix is expressed as $\Theta = \mathrm{diag}(\beta _{1}e^{j\theta_{1}},\beta _{2}e^{j\theta_{2}},\dots,\beta _{N}e^{j\theta_{N}} )\quad(j \in \sqrt{-1} )$, in which the $n$th element is $\beta_{n}e^{j\theta_{n}}$, where $\beta_{n} \in \left[0,1\right]$ denotes the amplitude coefficient and $\theta \in  \left [ 0,2\pi \right)$ represents the phase shift coefficient.
Due to the difficulty in guaranteeing the channel quality of the Alice-Bob link, the communication is dynamically switched between SC and CC modes according to the channel difference between the legitimate and illegitimate receivers.
In this paper, we define the switch threshold as $\backepsilon$, where the threshold refers to the indicator for the system to switch from SC to CC or switch from CC to SC.
\begin{figure}
\centering
\includegraphics[width=\linewidth]{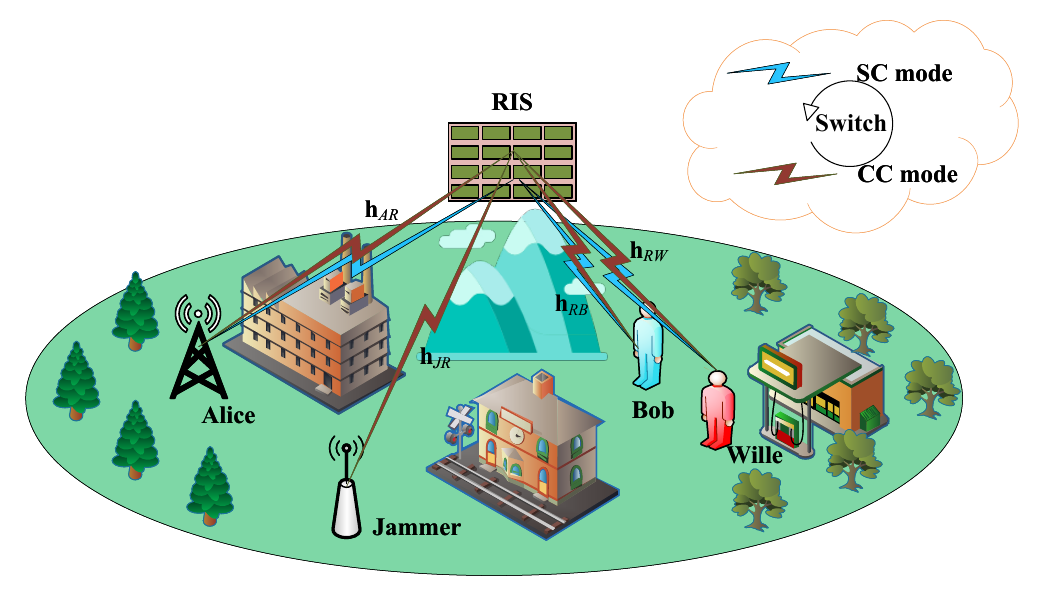}
\caption{The system model of the JSACC, where Alice communicates with Bob in the presence of a warden (Willie) and a friendly jammer (Jammer). When secrecy mode experiences an outage, the system switches to covert mode.}
\label{fig-model}
\end{figure}

\subsection{Channel Model}
We consider all channels to be subject to quasi-static flat fading.
The vectors of the fading channel coefficients between Alice/Jammer and RIS are represented by $\mathbf{h}_{AR}/ \mathbf h_{JR}$. 
The fading channel coefficient vectors between RIS and Bob/Willie are represented by $\mathbf h_{RB}/\mathbf h_{RW}$, and all vectors are $  N\times 1$. 
All the fading coefficients are according to the Nakagami-$m$ fading model with the fading parameter $m_{\chi}$, where $\chi \in \left\{AR, JR, RB, RW \right\}$ indicates the same meaning as in fading coefficients \cite{cheng2021downlink}. 
Specifically, $m_{\chi} = 1$ represents a NLoS scenario, while $m_{\chi} > 1$ corresponds to LoS conditions.
Moreover, each channel experiences path loss with coefficient $\alpha_{\chi}$.

The availability of CSI is characterized under the following assumptions:
1) We assume that Alice perfectly knows all the instantaneous CSI of the links of Alice-RIS-Bob, Alice-RIS-Willie \cite{si2021covert}, Jammer-RIS-Bob, and Jammer-RIS-Willie \cite{8910627}.
3) Willie possesses the perfect CSI of Jammer-RIS-Willie and Alice-RIS-Willie links, which is the worst condition for CC \cite{9524501,9438645}.

Particularly, numerous estimation schemes targeting accurate CSI acquisition have been investigated in the context of RIS-aided systems \cite{he2019cascaded, mishra2019channel}.

\subsection{Signal Model}
Alice transmits the signal sequence $x_{A}(k) = \sqrt{P_{A}}s_{A}(k), k = 1,2,\dots,K$, where $P_{A}$ is Alice's transmit power, $s_{A}(k)\ (\mathbb{E}(\left |s_{A}(k)  \right |^{2} ) = 1)$ is the message transmitted to Bob.
\subsubsection{SC mode}
When the SC rate is greater than the switch threshold $\backepsilon$, the system is activated in SC mode.
The signal at Bob is given by
\begin{equation}\label{eq-Bob-signal-sc}
\begin{split}
y_{B}^{SC}(k)=(\mathbf h_{RB}^{T}\Theta \mathbf  h_{AR} \sqrt{\mathcal{L}_{1}})x_{A}(k) + n_{1}(k),
\end{split}
\end{equation}
where $\mathcal{L}_{1}$ and $\mathcal{L}_{2}$ are the path loss for the links of Alice-RIS-Bob and Jammer-RIS-Bob.
$\mathcal{L}_{1} = L(d_{RB})L(d_{AR}),\mathcal{L}_{2} = L(d_{RW})L(d_{AR})$, where $L(d_{\chi}) = d_{\chi}^{-\alpha_{\chi}}$, $d_{\chi}$ is the distance, $\alpha_{\chi}$ is the loss coefficient, $\chi \in \left \{AR,JR,RB,RW \right \}$.

The signal-to-interference-plus-noise ratio (SINR) of Bob's received signal is expressed as
\begin{equation}\label{eq-Bob-sinr-sc}
\begin{split}
SINR_{B}^{SC} = \frac{P_{t}\left | \mathbf h_{RB}^{T}\Theta \mathbf h_{AR} \right |^{2}\mathcal{L}_{1}}                   {\sigma_{B}^{2}}.
\end{split}
\end{equation}
Similarly, the SINR of Willie's received signal is given by
\begin{equation}\label{eq-Willie-sinr-sc}
\begin{split}
SINR_{W}^{SC} = \frac{P_{t}\left | \mathbf h_{RW}^{T}\Theta \mathbf h_{AR} \right |^{2}\mathcal{L}_{2}}
{\sigma_{W}^{2}} .
\end{split}
\end{equation}
\subsubsection{CC mode}
When the SC rate is less than the switch threshold $\backepsilon$, the system is activated in CC mode, Jammer sends the interference signal sequence $x_{J}(k)=\sqrt{P_{J}}s_{J}(k) \ (\mathbb{E}(\left |s_{J}(k)  \right |^{2} ) = 1)$. 
The Jammer's signal can confuse Willie. 
Since the Jammer's power is a random variable that is uniformly distributed in $\left [ 0, P_{J}^{max} \right ]$. 
Thereby, the Jammer's power probability density function (PDF) is given by
\begin{equation}\label{eq-Jammer-pdf}
\begin{split}
f_{P_{J}}(x) = \left\{\begin{matrix}
\frac{1}{P_{J}^{max}}, & 0\le x \le P_{J}^{max},\\
0, & otherwise.
\end{matrix}\right.
\end{split}
\end{equation}
The signal at Bob is given by
\begin{equation}\label{eq-Bob-signal-cc}
\begin{split}
y_{B}^{CC}(k) = & (\mathbf h_{RB}^{T}\Theta \mathbf h_{AR} \sqrt{\mathcal{L}_{1}})x_{A}(k) \\ & +  (\mathbf h_{RB}^{T}\Theta \mathbf h_{JR} \sqrt{\mathcal{L}_{3}})x_{J}(k) + n_{B}(k),
\end{split}
\end{equation}
where $\mathcal{L}_{3} = L(d_{RB})L(d_{JR})$, $\mathcal{L}_{4} = L(d_{RW})L(d_{JR})$.
The SINR of the signal received by Bob is expressed as
\begin{equation}\label{eq-Bob-sinr-cc}
\begin{split}
SINR_{B}^{CC} = \frac{P_{A}\left | \mathbf h_{RB}^{T}\Theta \mathbf h_{AR} \right |^{2}\mathcal{L}_{1}  }
{\varrho  P_{J}\left | \mathbf h_{RB}^{T}\Theta \mathbf h_{JR} \right |^{2}\mathcal{L}_{3} + \sigma_{B}^{2}},
\end{split}
\end{equation}
where $P_{t} = P_{A} + P_{J}^{max}$, $P_{A} = \kappa P_{t}$ and $\varrho  \in \left [0,1  \right ] $ is the self-interference cancellation coefficient.
In this paper, we assume that $\varrho  = 0$.
Similarly, the SINR of the signal received by Willie is given by:
\begin{equation}\label{eq-Willie-sinr-cc}
\begin{split}
SINR_{W}^{CC} = \frac{P_{A}\left | \mathbf h_{RW}^{T}\Theta \mathbf h_{AR} \right |^{2}\mathcal{L}_{2}}
{P_{J}\left | \mathbf h_{RW}^{T}\Theta \mathbf h_{JR} \right |^{2}\mathcal{L}_{4} + \sigma_{W}^{2}}.
\end{split}
\end{equation}

As an illegitimate warden, Willie seeks to detect whether Alice is transmitting messages to Bob according to the received signal sequence $y_{W}(k),k =1,2,\dots, K$, using the Neyman-Pearson test \cite{zhou2021intelligent,kong2021intelligent}.
Accordingly, Willie faces a binary hypothesis testing problem: 1) the null hypothesis $\mathcal{T}_0$, corresponding to no transmission from Alice to Bob; 2) the alternative hypothesis $\mathcal{T}_1$, Alice is actively transmitting messages to Bob.
Willie's received signals under these two hypotheses are as follows:

\begin{equation}\label{eq-cc-hyp}
\begin{split}
\left\{\begin{matrix}
\mathcal{T}_0 :& \ y_{W}(k) = \mathbf h_{RW}\Theta \mathbf h_{JR} \sqrt{\mathcal{L}_4}x_{J}(k) + n_{W}(k), \\
\mathcal{T}_1 :& \ y_{W}(k) = \mathbf h_{RW}\Theta \mathbf h_{JR} \sqrt{\mathcal{L}_4}x_{J}(k) + \\ &\mathbf h_{RW}\Theta \mathbf h_{AR} \sqrt{\mathcal{L}_2}x_{A}(k) + n_{W}(k),
\end{matrix}\right.
\end{split}
\end{equation}
where $\mathcal{L}_4=L(d_{RW})L(d_{JR})$.

Based on \eqref{eq-cc-hyp}, Willie employs a radiometer to perform the binary detection.
By observing the average power of the signal received at Willie, $P_{W}=\frac{1}{K}\sum_{k=1}^{K}\left| 
y_{W}(k) \right|^2$ the decision rule is defined as follows:
\begin{equation}\label{eq-pw-condition}
\begin{split}
P_W \;\overset{\mathcal U_1}{\underset{\mathcal U_0}{\gtrless}}\;\tau,
\end{split}
\end{equation}
where $\tau > 0$ denotes Willie's detection threshold, with $\mathcal U_0$ and $\mathcal U_1$ corresponding to decisions in favor of $\mathcal{T}_0$ and $\mathcal{T}_1$, respectively.
We assume Willie observes an infinite number of signal samples for the binary detection, i.e., $K \to \infty$ \cite{lv2021covert}.
Consequently, there is certainty of transmitted signals and received noises, and Willie's average received power is given by
\begin{equation}\label{eq-Willie-signal-cc}
\begin{split}
P_W = \left\{\begin{matrix}
\zeta_1 P_J + \zeta_2,	& \mathcal{T}_0, \\
\zeta_1 P_J + \zeta_3,	& \mathcal{T}_1,
\end{matrix}\right.
\end{split}
\end{equation}
where $\zeta_1 = \left | \mathbf h_{RW}^{T}\Theta \mathbf h_{JR} \right |^{2}\mathcal{L}_{4}$, $\zeta_2 = \sigma_{W}^{2}$ and $\zeta_3 = P_{A}\left | \mathbf h_{RW}^{T}\Theta \mathbf h_{AR} \right |^{2}\mathcal{L}_{2} + \sigma_{W}^{2}$.

\section{Performance Metrics and Channel Statistics}\label{section-3}
In this section, we introduce the performance metrics, followed by the channel statistics required for the performance analysis.
\subsection{Performance Metrics}
In this paper, we primarily investigate three metrics to measure system performance: SC rate, CC rate, and detection error probability (DEP).
\subsubsection{Data Rate of SC}
In the SC mode, the Jammer remains inactive, based on \eqref{eq-Bob-sinr-sc} and \eqref{eq-Willie-rate-sc}, the rates of Bob and Willie are
\begin{equation}\label{eq-Bob-rate-sc}
\begin{split}
R_{B}^{SC} = B\log_2 \left( 1 + SINR_{B}^{SC} \right),
\end{split}
\end{equation}
and
\begin{equation}\label{eq-Willie-rate-sc}
\begin{split}
R_{W}^{SC} = B\log_2 \left( 1 + SINR_{W}^{SC} \right).
\end{split}
\end{equation}
Therefore, the secrecy capacity of the SC system is
\begin{equation}\label{eq-rate-sc}
\begin{split}
R_{s} = \left[ R_{B}^{SC} - R_{W}^{SC} \right]^{+}.
\end{split}
\end{equation}
\subsubsection{Data Rate of CC}
In the CC mode, the Jammer remains active, interfering with Willie's receiving ability, based on \eqref{eq-Bob-sinr-cc}, thereby Bob's rate is
\begin{equation}\label{eq-Bob-rate-cc}
\begin{split}
R_{B}^{CC} = B\log_2 \left( 1 + SINR_{B}^{CC} \right).
\end{split}
\end{equation}
\subsubsection{Detection Error Probability}
Willie's detection error indicates that Willie makes an error in binary detection when Willie erroneously considers Alice to be silent while Alice is transmitting, or when Willie incorrectly decides that Alice is transmitting while she is actually silent.
It is defined as
\begin{equation}\label{eq-dep}
\begin{split}
\varsigma = \mathbb{P}_{FA} + \mathbb{P}_{MD},
\end{split}
\end{equation}
where $\mathbb{P}_{MD} = \Pr(\mathcal{T}_1) \Pr\left( \mathcal{U}_0 | \mathcal{T}_1 \right)$ is the miss detection probability, $\mathbb{P}_{FA} = \Pr(\mathcal{T}_0) \Pr\left( \mathcal{U}_1 | \mathcal{T}_0 \right)$ is the false alarm probability \cite{he2017covert}.

\subsection{Channel Statistics}
Accurate channel statistics are essential for analyzing system performance.
\subsubsection{Alice-RIS-Bob Link}
The RIS parameters are configured to optimize the channel quality at Bob, i.e., by maximizing the channel gain of the Alice-RIS-Bob link as follows:
\begin{equation}\label{eq-links-A-R-B-1}
\begin{split}
\left | \mathbf  {h} _{RB}^{T}\Theta \mathbf  {h} _{AR} \right | =\left |     \sum_{n=1}^{N}\beta_{n}h_{RB,n}h_{AR,n}e^{j\theta _{n}}   \right |,
\end{split}
\end{equation}
where $h_{RB,n}$ and $h_{AR,n}$ are the $n$th element of $\mathrm {h} _{RB}$ and $\mathrm {h}_{AR}$, respectively.
By adjusting phase-shift variables $\theta _{n} \left ( n=1,2,\dots, N \right ) $, i.e., maximizing the channel gain, and setting similarly the phases of all $h_{RB,n},h_{AR,n},e^{j\theta _{n}}$. 
Accordingly, the generalized solution is given by $\theta _{n}^{*}=\tilde{\theta } -arg(h_{RB,n},h_{AR,n})$, with $\tilde{\theta }$ denoting a constant of arbitrary value in the interval $\left [ 0,2\pi  \right )$. 
Then, using the optimal $\left \{ \theta _{n}^{*} \right \} $, we have
\begin{equation}\label{eq-links-A-R-B-2}
\begin{split}
\left | \mathbf  {h} _{RB}^{T}\Theta \mathbf  {h} _{AR} \right |^{2} = \beta ^{2}\left ( \sum_{n=1}^{N}\left | h_{RB,n} \right | \left | h_{AR,n} \right |   \right ) ^{2},
\end{split}
\end{equation}
where $\beta_{n}=\beta$, $\forall n$ without loss of generality.

After adopting the optimal $\left \{ \theta _{n}^{*} \right \}$, by referring to \cite[Lemma 1, Lemma 2]{cheng2021downlink}, new channel statistics can be derived by using the CLT as shown in the following lemma.
\begin{lemma}\label{lemma-low-snr}
Denote that $\left | g_{ARB} \right | ^{2}=\frac{\left ( \sum_{n=1}^{N}\left | h_{RB,n} \right | \left | h_{AR,n} \right |  \right )^{2} }{N(1-\mu )}$, where $\mu = \frac{1}{m_{RB}m_{AR}}\left ( \frac{\Gamma (m_{RB}+\frac{1}{2} )}{\Gamma (m_{RB})}  \right ) ^{2} \left ( \frac{\Gamma (m_{AR}+\frac{1}{2} )}{\Gamma (m_{AR})}  \right ) ^{2}$. 
As the number of reflecting elements $N$ becomes large, $\left | g_{ARB} \right | ^{2}$ tends to follow a noncentral chi-square distribution with $\left | g_{ARB} \right | ^{2} \sim   \mathcal{X} _{1}(\lambda )$, where $\lambda = \frac{N\mu}{1-\mu}$.
Its PDF and CDF are given by
\begin{equation}\label{eq-links-A-R-B--low-pdf}
\begin{split}
f _{\left | g_{ARB} \right | ^{2}}(x)=e^{-\frac{x-\lambda }{2} }\sum_{i=0}^{\infty } \frac{\lambda ^{i}x^{i-\frac{1}{2} }}{i!2^{2i+\frac{1}{2}}\Gamma (i+\frac{1}{2} )},
\end{split}
\end{equation}
and
\begin{equation}\label{eq-links-A-R-B-low-cdf}
\begin{split}
F _{\left | g_{ARB} \right | ^{2}}(x)=e^{-\frac{\lambda }{2} }\sum_{i=0}^{\infty } \frac{\lambda ^{i}\gamma (i+\frac{1}{2},\frac{x}{2}  )}{i!2^{i}\Gamma (i+\frac{1}{2} )},
\end{split}
\end{equation}
for $x\ge 0$, respectively.
\end{lemma}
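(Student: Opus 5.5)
The plan is to reduce the statement to a central limit theorem for the sum $Z=\sum_{n=1}^{N}|h_{RB,n}||h_{AR,n}|$, normalize so that the Gaussian limit has unit variance, and then identify the square of a unit-variance Gaussian with nonzero mean as a noncentral chi-square variable with one degree of freedom, following \cite[Lemma 1, Lemma 2]{cheng2021downlink}.

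\textbf{Moments of one term.} Let $X_n=|h_{RB,n}||h_{AR,n}|$. These are i.i.d. across $n$, with the two factors independent Nakagami-$m$ amplitudes normalized to unit second moment. For such an amplitude $|h|$ with parameter $m$, the standard Nakagami moments give
\begin{equation*}
\mathbb{E}|h|=\frac{\Gamma(m+\tfrac12)}{\Gamma(m)}\frac{1}{\sqrt{m}},\qquad \mathbb{E}|h|^2=1.
\end{equation*}
By independence,
\begin{equation*}
\mathbb{E}X_n=\frac{\Gamma(m_{RB}+\tfrac12)\Gamma(m_{AR}+\tfrac12)}{\Gamma(m_{RB})\Gamma(m_{AR})\sqrt{m_{RB}m_{AR}}}=\sqrt{\mu},\qquad \mathbb{E}X_n^2=1,
\end{equation*}
so that $\mathrm{Var}(X_n)=1-\mu$. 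This identifies $\mu$ in the statement as $(\mathbb{E}X_n)^2$.

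\textbf{CLT and normalization.} Since the $X_n$ are i.i.d. with finite variance, the Lindeberg--L\'evy CLT gives, for large $N$,
\begin{equation*}
Z\approx\mathcal N\big(N\sqrt\mu,\;N(1-\mu)\big).
\end{equation*}
Dividing by $\sqrt{N(1-\mu)}$ yields $Y=Z/\sqrt{N(1-\mu)}\approx\mathcal N(\sqrt{\lambda},1)$ with $\lambda=N\mu/(1-\mu)$. By definition, the square of a unit-variance Gaussian with mean $\sqrt\lambda$ is noncentral chi-square with one degree of freedom and noncentrality $\lambda$. Hence $|g_{ARB}|^2=Y^2\sim\mathcal X_1(\lambda)$.

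\textbf{PDF and CDF.} I would take the Poisson-mixture representation $\mathcal X_1(\lambda)=\sum_i \frac{e^{-\lambda/2}(\lambda/2)^i}{i!}\,\chi^2_{1+2i}$ and insert the central chi-square density $x^{k/2-1}e^{-x/2}/(2^{k/2}\Gamma(k/2))$ with $k=1+2i$. Collecting the powers of $2$ gives $2^{i}\cdot 2^{i+1/2}=2^{2i+1/2}$ in the denominator, which reproduces \eqref{eq-links-A-R-B--low-pdf}. Integrating term by term, each central chi-square CDF is $\gamma(i+\tfrac12,x/2)/\Gamma(i+\tfrac12)$, and the Poisson weight $(\lambda/2)^i/i!$ gives the remaining factor $\lambda^i/(i!\,2^i)$, which yields \eqref{eq-links-A-R-B-low-cdf}.

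\textbf{Main obstacle.} The delicate point is rigor rather than algebra. The CLT gives convergence in distribution of the standardized variable $(Z-N\sqrt\mu)/\sqrt{N(1-\mu)}$, whereas $Y$ itself has a diverging mean $\sqrt\lambda$. The statement is therefore an approximation ("tends to follow") and not a true limit. I would present it as such: apply the continuous mapping theorem to the standardized Gaussian, and interpret $\mathcal X_1(\lambda)$ as the matching approximation, in line with \cite{cheng2021downlink}. A secondary point is that $Y$ can be negative under the Gaussian approximation but not in reality; squaring makes this harmless, and the associated probability mass is negligible for large $\lambda$.
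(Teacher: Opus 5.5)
Your proposal is correct and follows essentially the paper's route: the paper gives no separate argument and instead defers to the CLT-based derivation of \cite[Lemma 1, Lemma 2]{cheng2021downlink}, which matches your steps ($\mathbb{E}X_n=\sqrt{\mu}$, $\mathrm{Var}(X_n)=1-\mu$, a Gaussian approximation of the normalized sum with mean $\sqrt{\lambda}$ and unit variance, and the Poisson-mixture form of $\mathcal{X}_1(\lambda)$). One correction to your last step: the mixture gives the prefactor $e^{-\frac{x+\lambda}{2}}$, not the $e^{-\frac{x-\lambda}{2}}$ printed in \eqref{eq-links-A-R-B--low-pdf}, so the printed PDF has a sign typo (it disagrees with the derivative of the stated CDF and would integrate to $e^{\lambda}$), and you should say your derivation corrects that formula rather than reproduces it.
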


\begin{lemma}\label{lemma-high-snr}
Denote that $Q=\sum_{k=1}^{\infty}|h_{RB,n}||h_{AR,n}|$. 
When $m_{RB}\ne m_{AR}$, the PDF and CDF of $Q$ for $q \to 0^{+}$ are given by
\begin{equation}\label{eq-links-A-R-B--high-pdf}
\begin{split}
f_{Q}^{0^{+}}(q)=\frac{\tilde{m}^{N}}{\Gamma(2m_{s}N)}
q^{2m_{s}N-1}e^{-2\sqrt{m_{s}m_{l}}q},
\end{split}
\end{equation}
and
\begin{equation}\label{eq-links-A-R-B-high-cdf}
\begin{split}
F_{Q}^{0^{+}}(q)=\frac{\tilde{m}^{N}(4m_{s}m_{l})^{-m_{s}N}}{\Gamma(2m_{s}N)}
\gamma(2m_{s}N,2\sqrt{m_{s}m_{l}}q),
\end{split}
\end{equation}
for $q\ge0$, respectively, where $m_{s}=\mathrm{min}\left\{m_{RB},m_{AR}\right\}$, $m_{l}=\mathrm{max}\left\{m_{RB},m_{AR}\right\}$, and $\tilde{m}=\frac{\sqrt{\pi}4^{m_{s}-m_{l}+1}(m_{s}m_{l})^{m_{s}}\Gamma(2m_{s})\Gamma(2m_{l}-2m_{s})}{\Gamma(m_{s})\Gamma(m_{l})\Gamma(m_{l}-m_{s}+\frac{1}{2})}$.
\end{lemma}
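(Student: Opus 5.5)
The plan is to work with the Laplace transform (moment generating function) of a single summand $Y_n=|h_{RB,n}||h_{AR,n}|$. The lemma then follows from three facts: the small-$q$ behaviour of the density of $Q$ is governed by the large-$s$ behaviour of its Laplace transform; the $Y_n$ are i.i.d., so the transform of $Q$ is the $N$th power of that of $Y_n$; and a transform of the form $(s+c)^{-k}$ inverts exactly to a Gamma-type kernel $q^{k-1}e^{-cq}$. (I read the upper limit of the sum defining $Q$ as $N$, consistent with \eqref{eq-links-A-R-B-2}.)

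\textbf{Step 1: density of one summand.} For independent Nakagami-$m$ amplitudes with unit mean power, the product density is
\begin{equation*}
f_Y(y)=\frac{4(m_{RB}m_{AR})^{\frac{m_{RB}+m_{AR}}{2}}}{\Gamma(m_{RB})\Gamma(m_{AR})}\,y^{m_{RB}+m_{AR}-1}K_{m_l-m_s}\!\left(2\sqrt{m_sm_l}\,y\right),
\end{equation*}
where $K_\nu$ is the modified Bessel function of the second kind. This is obtained by conditioning on one factor and using the integral representation of $K_\nu$.

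\textbf{Step 2: exact Laplace transform.} Set $c=2\sqrt{m_sm_l}$, $\mu=m_s+m_l$ and $\nu=m_l-m_s$, so that $\mu+\nu=2m_l$ and $\mu-\nu=2m_s$. The tabulated integral
\begin{equation*}
\int_0^\infty y^{\mu-1}e^{-sy}K_\nu(cy)\,dy=\frac{\sqrt{\pi}(2c)^{\nu}\Gamma(\mu+\nu)\Gamma(\mu-\nu)}{(s+c)^{\mu+\nu}\Gamma(\mu+\frac12)}\,{}_2F_1\!\left(\mu+\nu,\nu+\tfrac12;\mu+\tfrac12;\tfrac{s-c}{s+c}\right)
\end{equation*}
then gives $\mathcal{L}_Y(s)$ exactly.

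\textbf{Step 3: large-$s$ asymptotics (the main obstacle).} As $s\to\infty$, the hypergeometric argument $z=\frac{s-c}{s+c}$ tends to $1$. Here $c-a-b=2m_s-2m_l-\frac12$ in hypergeometric notation, which is negative, so ${}_2F_1(\cdot;1)$ does not converge. I would instead apply the standard $z\to 1-z$ connection formula, which splits ${}_2F_1$ into two pieces:
\begin{itemize}
\item a regular part;
\item a part multiplied by $(1-z)^{c-a-b}$, where $1-z=\frac{2c}{s+c}$.
\end{itemize}
The second piece dominates. Its coefficient is $\Gamma(a+b-c)\Gamma(c)/(\Gamma(a)\Gamma(b))$, which introduces the factors $\Gamma(2m_l-2m_s)$ and $\Gamma(m_l-m_s+\frac12)$ seen in $\tilde m$. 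The connection formula is valid only when $a+b-c$ is not an integer, which is presumably where the hypothesis $m_{RB}\neq m_{AR}$ enters. Combining the powers of $(s+c)$ should give
\begin{equation*}
\mathcal{L}_Y(s)\simeq \tilde m\,(s+c)^{-2m_s}.
\end{equation*}
The remaining work is to collect the constants $\sqrt\pi$, the powers of $2$ and $c$, and $\Gamma(2m_s)$ into the stated $\tilde m$, using the duplication formula to simplify $\Gamma(\mu+\frac12)$. This bookkeeping is the most error-prone part. If the constant fails to match, I would cross-check by expanding $K_\nu$ near zero, since the coefficient of $y^{2m_s-1}$ must agree.

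\textbf{Step 4: sum, invert, integrate.} By independence,
\begin{equation*}
\mathcal{L}_Q(s)=\mathcal{L}_Y(s)^N\simeq\tilde m^N(s+c)^{-2m_sN}.
\end{equation*}
Inverting with the shift rule, $(s+c)^{-k}$ corresponds to $q^{k-1}e^{-cq}/\Gamma(k)$. By the Tauberian correspondence between large $s$ and small $q$, this yields \eqref{eq-links-A-R-B--high-pdf}. Integrating from $0$ to $q$ and substituting $t=cq$ gives a lower incomplete gamma function with prefactor $c^{-2m_sN}=(4m_sm_l)^{-m_sN}$, which is \eqref{eq-links-A-R-B-high-cdf}.
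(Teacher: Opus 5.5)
Your route is the right one. The paper gives no proof of its own and imports the lemma from its cited reference. The factors $\Gamma(2m_l-2m_s)$ and $\Gamma(m_l-m_s+\frac12)$ in $\tilde m$ are exactly the constants that the ${}_2F_1$ limit at argument $1$ produces. So the plan (double-Nakagami density, tabulated $K_\nu$ transform, large-$s$ limit, $N$th power, exact inversion) is the standard derivation behind it.

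There is, however, an arithmetic slip in Step 3 that breaks the conclusion as written. With $a=\mu+\nu=2m_l$, $b=\nu+\frac12$ and third parameter $\mu+\frac12$, the exponent is $(\mu+\frac12)-a-b=-2\nu=2m_s-2m_l$, not $2m_s-2m_l-\frac12$. With your value the powers combine to $(s+c)^{-2m_l}(s+c)^{2m_l-2m_s+\frac12}=(s+c)^{-2m_s+\frac12}$. That gives neither $\tilde m(s+c)^{-2m_s}$ nor the $q^{2m_sN-1}$ law.

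With the correct value, $(1-z)^{-2\nu}=(2c)^{-2\nu}(s+c)^{2\nu}$ gives exactly $(s+c)^{-2m_s}$. The constants then collapse without any duplication formula:
\begin{itemize}
\item the limit coefficient $\Gamma(\mu+\frac12)\Gamma(2\nu)/(\Gamma(2m_l)\Gamma(\nu+\frac12))$ cancels the $\Gamma(\mu+\frac12)$ and $\Gamma(\mu+\nu)$ of the tabulated integral;
\item the powers give $\frac{4(m_sm_l)^{\mu/2}}{\Gamma(m_s)\Gamma(m_l)}(2c)^{\nu}(2c)^{-2\nu}=\frac{4^{m_s-m_l+1}(m_sm_l)^{m_s}}{\Gamma(m_s)\Gamma(m_l)}$.
\end{itemize}
Together with the $\sqrt{\pi}$ and $\Gamma(2m_s)$ of the tabulated integral, this reproduces $\tilde m$ exactly.

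The corrected exponent also fixes your account of the hypothesis. What $m_{RB}\neq m_{AR}$ buys is that $-2(m_l-m_s)$ is strictly negative. Then the singular term dominates the regular one, which decays like $(s+c)^{-2m_l}$ up to logarithms. At $m_{RB}=m_{AR}$ the exponent is $0$, ${}_2F_1$ has only a logarithmic singularity, and $\Gamma(2m_l-2m_s)$ in $\tilde m$ is infinite.

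Non-integrality of $a+b-c$ is not required. Use the limit $F(a,b;c;z)\sim\frac{\Gamma(c)\Gamma(a+b-c)}{\Gamma(a)\Gamma(b)}(1-z)^{c-a-b}$ as $z\to1^-$ directly, since it holds whenever $\mathrm{Re}(c-a-b)<0$. Relying on the two-term connection formula would wrongly exclude the paper's own setting $m_{AR}=3$, $m_{RB}=1.5$, where $2(m_l-m_s)=3$.

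Finally, the Tauberian step (cleanest on the monotone CDF) only pins down the leading term $\frac{\tilde m^N}{\Gamma(2m_sN)}q^{2m_sN-1}$. The factor $e^{-2\sqrt{m_sm_l}q}$ comes from inverting the approximate transform exactly, and that is the sense in which the lemma is an approximation.

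Your fallback check via $K_\nu(x)\sim\frac12\Gamma(\nu)(x/2)^{-\nu}$ is in fact a shorter, self-contained proof of that leading term. It gives the coefficient $\frac{2\Gamma(\nu)(m_sm_l)^{m_s}}{\Gamma(m_s)\Gamma(m_l)}$ of $y^{2m_s-1}$. This equals $\tilde m/\Gamma(2m_s)$ once the duplication formula is applied to $\Gamma(2\nu)$.
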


\begin{remark}\label{remark-low-snr}
    Since the CLT-based approximation of the PDF $f _{\left | g_{ARB} \right | ^{2}}(x)$ is not valid as $x \to 0^{+}$,
    we use Lemma \ref{lemma-low-snr} addresses the low-SNR regime, while Lemma \ref{lemma-high-snr} deals with the high-SNR regime.
\end{remark}

\subsubsection{Alice-RIS-Willie, Jammer-RIS-Bob, and Jammer-RIS-Willie Links}
For these links, the RIS phase shifts can be treated as random since the RIS is explicitly optimized for the Alice-RIS-Bob link.

Denote that $g_{ARW}=\frac{\mathbf h_{RB}^{T} \Theta \mathbf h_{AR}}{\beta}=\sum_{n=1}^{N}h_{RB,n}h_{AR,n}e^{j\theta_{n}^{*}}$, $g_{JRB}=\frac{\mathbf h_{RB}^{T} \Theta \mathbf h_{JR}}{\beta}=\sum_{n=1}^{N}h_{RB,n}h_{JR,n}e^{j\theta_{n}^{*}}$ and $g_{JRW}=\frac{\mathbf h_{RW}^{T} \Theta \mathbf h_{JR}}{\beta}=\sum_{n=1}^{N}h_{RW,n}h_{JR,n}e^{j\theta_{n}^{*}}$.
When $\left\{\theta_{n} \right\}$ is random and $N$ is large enough, $g_{\phi}\left(\phi \in \left\{ARW, JRB, JRW \right\}\right)$ can be approximated as a complex Gaussian random variable with mean zero and variance $N$, i,e., $g_{\phi} \sim \mathcal{CN}\left(0, N\right)$. 
Hence, its PDF and CDF of $|g_{\phi}|^2$ are given by
\begin{equation}\label{eq-links-other-pdf}
\begin{split}
f _{|g_{\phi }|^{2}}(x)=\frac{1}{N}e^{-\frac{x}{N}},
\end{split}
\end{equation}
and
\begin{equation}\label{eq-links-other-cdf}
\begin{split}
F _{|g_{\phi }|^{2}}(x)=1-e^{-\frac{x}{N}},
\end{split}
\end{equation}
for $x\ge 0$, respectively.

\section{Performance Analysis for Fixed-Rate Communications}\label{section-4}
In this section, we focus on the performance analysis of fixed-rate communications. 
Specifically, we derive the closed-form expressions for OPs and the diversity orders that determine how quickly the OPs decrease with increasing SNR, directly impacting the network's reliability and robustness, which are critical performance metrics in the communications regime.
In this section, the target rate is the switch threshold, which means $\bar{R} = \backepsilon$.

\subsection{Joint Secrecy And Covert Communications}
\subsubsection{Outage Probability}
Firstly, in the JSACC system, we assume the following switching condition
\begin{equation}\label{eq-compare-condition}
\begin{split}
R_{s} \;\overset{\mathcal{G}_{0}}{\underset{\mathcal{G}_{1}}{\gtrless}}\;\bar{R},
\end{split}
\end{equation}
Under each channel condition of $\mathcal{G}_{0}$ and $\mathcal{G}_{1}$, we can derive the corresponding OP.
The events that the system is outaged under $\mathcal{G}_{0}$ and $\mathcal{G}_{1}$ are $\mathcal{H}_{0}$ and $\mathcal{H}_{1}$, respectively.
The OPs under these two conditions are
$\Pr\left(\mathcal{H}_{0}|\mathcal{G}_{0}\right)=0$ and $\Pr\left(\mathcal{H}_{1}|\mathcal{G}_{1}\right)$, respectively.

\begin{theorem}\label{theorem-low-snr}
In the considered JSACC system, the OP of Bob is approximated as
\begin{equation}\label{eq-low-snr-op}
\begin{split}
\mathbb{P}_1 \approx e^{-\frac{\lambda }{2} - \frac{\varpi_2}{N} }\sum_{i=0}^{\infty } \frac{\lambda ^{i}\gamma \left(i+\frac{1}{2},\frac{\varpi_1}{2 \kappa \rho}\right)}{i!2^{i}\Gamma \left(i+\frac{1}{2} \right)} + \sum_{i=1}^{M} \omega_i \Phi_1(t_i),
\end{split}
\end{equation}
where $\rho = \frac{P_{t}}{\sigma^2_B}$, $\varpi_1 = \frac{2^{\frac{\bar{R}}{B}}-1}{\mathcal{L}_{1}\beta^{2}N\left(1-\mu\right)}$, $\varpi_2 =\frac{\varpi_1(1-\kappa)\mathcal{L}_1N(1-\mu)}{\rho \kappa \mathcal{L}_2 2^{\frac{\bar{R}}{B}}}$, $M$ is the number of nodes for the Chebyshev-Gauss quadrature, which determines the accuracy of the approximation, $\omega_i = \frac{\pi}{M}$ is the weight, $t_i = \cos \left( \frac{2i-1}{2M}\pi \right)$, $\Phi_1(t) = \sqrt{1-t^2}\frac{\varpi_2}{2N} \\ e^{-\frac{\varpi_2(t+1)}{2N}}F_{\left | g_{ARB} \right |^{2}} \left( \frac{\varpi_1}{\rho}+ \frac{\varpi_1 (1-\kappa) (t+1)}{2\rho \kappa} \right)$.
\end{theorem}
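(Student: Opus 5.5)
The plan is to write Bob's outage probability as the probability that the secrecy mode fails and the covert mode also fails. By the total probability rule and $\Pr(\mathcal{H}_0|\mathcal{G}_0)=0$, we have $\mathbb{P}_1=\Pr(\mathcal{G}_1,\mathcal{H}_1)=\Pr\left(R_s<\bar{R},\,R_B^{CC}<\bar{R}\right)$.

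\textbf{Step 1: reduce both events to two independent variables.} Using \eqref{eq-links-A-R-B-2} and the normalisation of Lemma~\ref{lemma-low-snr}, set $X=|g_{ARB}|^2$, so that $|\mathbf h_{RB}^T\Theta\mathbf h_{AR}|^2=\beta^2N(1-\mu)X$. Set $Y=|g_{ARW}|^2$, which is exponential with mean $N$ by \eqref{eq-links-other-pdf}; I assume $X$ and $Y$ are independent, as the paper's channel model implicitly does.
\begin{itemize}
\item Covert mode. With $\varrho=0$ and $P_A=\kappa P_t$, the event $R_B^{CC}<\bar{R}$ becomes $X<\varpi_1/(\kappa\rho)$.
\item Secrecy mode. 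The event $R_s<\bar{R}$ means $1+\rho\mathcal{L}_1\beta^2N(1-\mu)X<2^{\bar{R}/B}\bigl(1+\text{SINR}_W^{SC}\bigr)$. Solving for $Y$, this is $Y>c\,(X-\varpi_1/\rho)$, where $c$ is a constant proportional to $\mathcal{L}_1N(1-\mu)/(\mathcal{L}_2 2^{\bar{R}/B})$. This constant is exactly what builds $\varpi_2$, and I will track the $\sigma_W^2$ versus $\sigma_B^2$ normalisation so that it matches the stated $\varpi_2$.
\end{itemize}

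\textbf{Step 2: split the range of $X$.} On $X<\varpi_1/\rho$, Bob's secrecy-mode SNR alone already gives $R_s\le R_B^{SC}<\bar{R}$. The $Y$-condition is therefore automatic, and this region contributes $F_X(\varpi_1/\rho)$. On $\varpi_1/\rho\le X<\varpi_1/(\kappa\rho)$ (nonempty since $\kappa<1$), conditioning on $X$ gives $\Pr(Y>c(X-\varpi_1/\rho))=e^{-c(X-\varpi_1/\rho)/N}$. Hence
\begin{equation*}
\mathbb{P}_1=F_X\!\left(\tfrac{\varpi_1}{\rho}\right)+\int_{\varpi_1/\rho}^{\varpi_1/(\kappa\rho)}e^{-\frac{c(x-\varpi_1/\rho)}{N}}\,dF_X(x).
\end{equation*}

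\textbf{Step 3: integrate by parts.} This avoids working with the awkward PDF series. The boundary terms give $e^{-\varpi_2/N}F_X(\varpi_1/(\kappa\rho))-F_X(\varpi_1/\rho)$, with $\varpi_2=c\,\varpi_1(1-\kappa)/(\kappa\rho)$. The $-F_X(\varpi_1/\rho)$ term cancels the first term of Step 2. Inserting the CDF \eqref{eq-links-A-R-B-low-cdf} into the surviving boundary term produces the closed-form series $e^{-\lambda/2-\varpi_2/N}\sum_i(\cdots)$ in \eqref{eq-low-snr-op}. What remains is $\int_{\varpi_1/\rho}^{\varpi_1/(\kappa\rho)}\frac{c}{N}e^{-c(x-\varpi_1/\rho)/N}F_X(x)\,dx$.

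\textbf{Step 4: Chebyshev--Gauss quadrature on the remaining integral.} Map the interval to $[-1,1]$ via $x=\frac{\varpi_1}{\rho}+\frac{\varpi_1(1-\kappa)(t+1)}{2\rho\kappa}$. Multiply and divide by $\sqrt{1-t^2}$, then apply the rule with nodes $t_i=\cos\frac{(2i-1)\pi}{2M}$ and weights $\pi/M$. The Jacobian $\frac{\varpi_1(1-\kappa)}{2\rho\kappa}$ combines with $c/N$ to give $\varpi_2/(2N)$, and the exponent becomes $\varpi_2(t+1)/(2N)$. The integrand is then exactly $\Phi_1(t)$, which yields the second sum.

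The main obstacle is Step 3 together with the algebra in Step 1. The remaining integral has no closed form, because $F_X$ is an infinite series of incomplete gamma functions multiplied by an exponential. That is why I would choose integration by parts, to isolate the closed-form part, and push only the residual integral into quadrature. Getting the constant $c$ right, and hence $\varpi_2$ as stated, also needs care.
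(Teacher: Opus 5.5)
Your proposal is correct and yields exactly the stated expression. The difference from the paper is that you integrate in the opposite order.

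\textbf{The paper's route.} The paper conditions on Willie's gain $Y$ rather than on $X$. It writes $\mathbb{P}_1=\Pr\bigl(X<\min\{\varpi_1/(\kappa\rho),\ \varpi_1/\rho+Y/c\}\bigr)$, with your $c=\mathcal{L}_1N(1-\mu)/(\mathcal{L}_2 2^{\bar{R}/B})$. It then notes that the two thresholds cross at $Y=\varpi_2$ and splits the $y$-integral there.
\begin{itemize}
\item On $y>\varpi_2$ the CDF factor $F_X(\varpi_1/(\kappa\rho))$ is constant. The exponential tail integral then gives the closed-form term $e^{-\varpi_2/N}F_X(\varpi_1/(\kappa\rho))$ directly.
\item The remaining piece $\int_0^{\varpi_2}\frac{1}{N}e^{-y/N}F_X(\varpi_1/\rho+y/c)\,dy$ goes to Chebyshev--Gauss quadrature via $y=\varpi_2(t+1)/2$.
\end{itemize}

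\textbf{How the two routes match.} Under $y=c(x-\varpi_1/\rho)$, your residual integral after Step 3 is exactly the paper's integral. Your integration by parts is therefore precisely the step that undoes conditioning on $X$ instead of $Y$.

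\textbf{What each route buys.} The paper's order is shorter: only the CDF of $X$ ever appears, and there is no boundary-term cancellation to track. Yours gives a clearer picture of the three regimes: secrecy fails irrespective of Willie; covert mode is the only remaining chance; no outage. It also shows explicitly that $\kappa<1$ is needed for the middle region to be nonempty, equivalently $\varpi_2>0$.

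\textbf{A tacit assumption.} In either route, obtaining $\varpi_2$ exactly as stated, rather than up to a constant factor, requires $\sigma_W^2=\sigma_B^2$. The paper assumes this without saying so, by using the single SNR $\rho=P_t/\sigma_B^2$ for both receivers. You should state it explicitly when fixing $c$.
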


\begin{proof}\label{proof-low-snr}
First, according to equation \eqref{eq-compare-condition}, $\mathbb{P}_1$ is obtained from
\begin{equation}\label{eq-low-snr-op-proof-1}
\begin{split}
\mathbb{P}_1 &=  \Pr\left(\mathcal{G}_{0}\right)\Pr\left(\mathcal{H}_{0}|\mathcal{G}_{0}\right)+
\Pr\left(\mathcal{G}_{1}\right)\Pr\left(\mathcal{H}_{1}|\mathcal{G}_{1}\right) \\ &
= \Pr\left(SINR_{B}^{CC}<2^{\frac{\bar{R}}{B}}-1,
\frac{1+SINR_{B}^{SC}}{1+SINR_{W}^{SC}}<2^{\frac{\bar{R}}{B}}\right) \\ &
=  \Pr\left( X < \frac{\varpi_1}{\kappa \rho}, X < \frac{\varpi_1}{\rho} + \frac{2^{\frac{\bar{R}}{B}}\mathcal{L}_2y}{\mathcal{L}_{1}N\left(1-\mu\right)} \right),
\end{split}
\end{equation}
where $X = \frac{\left | \mathbf  {h} _{RB}^{T}\Theta \mathbf  {h} _{AR} \right |^{2}N\left(1-\mu\right)}{\beta^2}$ and $Y = \frac{\left |\mathbf h_{RW}^{T}\Theta \mathbf h_{AR} \right |^{2}}{\beta^2}$.
According to \eqref{eq-low-snr-op-proof-1}, the OPs are derived from
\begin{equation}\label{eq-low-snr-op-proof-2}
\begin{split}
\mathbb{P}_1 &=  \Pr \left( X < min\left\{\frac{\varpi_1}{\kappa \rho}, \frac{\varpi_1}{\rho} + \frac{2^{\frac{\bar{R}}{B}}\mathcal{L}_2y}{\mathcal{L}_{1}N\left(1-\mu\right)} \right\} \right)  \\
&= \int_{0}^{\varpi_2} \frac{e^{-\frac{y}{N}}}{N} F_{\left | g_{ARB} \right |^{2}}\left(\frac{\varpi_1}{\rho} + \frac{2^{\frac{\bar{R}}{B}}\mathcal{L}_2y}{\mathcal{L}_{1}N\left(1-\mu\right)} \right) dy \\
&+ \int_{\varpi_2}^{\infty} F_{\left | g_{ARB} \right |^{2}}\left(\frac{\varpi_1}{\kappa \rho} \right) \frac{e^{-\frac{y}{N}}}{N} dy.
\end{split}
\end{equation}
This completes the proof.
\end{proof}
\begin{remark}\label{remark-high-snr}
Remark~\ref{remark-low-snr} indicates that $\mathbb{P}_1$ is accurate in the low-SNR regime but becomes inaccurate in the high-SNR regime. 
We can use Lemma~\ref{lemma-high-snr} to derive the high-SNR approximation of $\mathbb{P}_1$.
\end{remark}

\begin{corollary}\label{proposition-diversity-order}
In the high-SNR regime, when $m_{AR} \neq m_{RB}$, the  approximation of $\mathbb{P}_1$ is given by
\begin{equation}\label{eq-high-snr-op}
\begin{split}
\mathbb{P}_1^{\infty} = \frac{\tilde{m}^{N}\tilde{c}_{1}^{2m_{s}N} \kappa^{-m_sN}}{\Gamma (2m_{s}N +1 )}\rho^{-m_{s}N},
\end{split}
\end{equation}
where $\tilde{c}_{1} = \sqrt{\frac{2^{\frac{\bar{R}}{B}}-1}{\beta^{2}\mathcal{L}_{1}}}$ and $\tilde{c}_2 = \frac{\tilde{c}_1^2(1- \kappa) \mathcal{L}_1}{\kappa 2^{\frac{\bar{R}}{B}}\mathcal{L}_2}\rho^{-1}$.
\end{corollary}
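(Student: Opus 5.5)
We start from the event decomposition in the proof of Theorem~\ref{theorem-low-snr}. There, $\mathbb{P}_1$ is the probability of the joint event
\[
SINR_B^{CC}<2^{\bar R/B}-1 \quad\text{and}\quad \frac{1+SINR_B^{SC}}{1+SINR_W^{SC}}<2^{\bar R/B}.
\]
The difference now is that we write everything in terms of $Q=\sum_{n}|h_{RB,n}||h_{AR,n}|$ rather than the CLT variable $|g_{ARB}|^2$. The reason is that, by Remark~\ref{remark-low-snr}, the CLT approximation fails near the origin, and the origin is exactly the region that governs the high-SNR behaviour.

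By \eqref{eq-links-A-R-B-2}, $|\mathbf h_{RB}^T\Theta\mathbf h_{AR}|^2=\beta^2Q^2$. With $\varrho=0$ the CC condition becomes $Q^2<\tilde c_1^2/(\kappa\rho)$. The SC condition becomes
\[
Q^2<\frac{\tilde c_1^2}{\rho}+\frac{2^{\bar R/B}\mathcal{L}_2}{\mathcal{L}_1}\,Y,
\]
with $Y\sim\mathrm{Exp}(1/N)$ as in \eqref{eq-links-other-pdf}. Conditioning on $Y$ exactly as in \eqref{eq-low-snr-op-proof-2} splits $\mathbb{P}_1$ into two pieces:
\[
\int_0^{\tilde c_2}\frac{e^{-y/N}}{N}F_Q\!\left(\sqrt{\tfrac{\tilde c_1^2}{\rho}+\tfrac{2^{\bar R/B}\mathcal{L}_2y}{\mathcal{L}_1}}\right)dy
\;+\;e^{-\tilde c_2/N}\,F_Q\!\left(\tilde c_1/\sqrt{\kappa\rho}\right).
\]
Here the breakpoint is $\tilde c_2$, which is $\propto\rho^{-1}$ as defined in the corollary.

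Next we take $\rho\to\infty$. The breakpoint $\tilde c_2\to0$, so the first integral is over an interval of length $O(\rho^{-1})$. On that interval the argument of $F_Q$ is at most $\tilde c_1/\sqrt{\kappa\rho}$, so the first integral is bounded by $\tilde c_2/N\cdot F_Q(\tilde c_1/\sqrt{\kappa\rho})$. This is a factor $\rho^{-1}$ smaller than the second term and is negligible. In the second term, $e^{-\tilde c_2/N}\to1$.

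It remains to expand $F_Q$ at small argument using Lemma~\ref{lemma-high-snr}, which requires $m_{AR}\neq m_{RB}$. We use $\gamma(s,x)\approx x^s/s$ as $x\to0$. With $s=2m_sN$ and $x=2\sqrt{m_sm_l}\,q$, the factor $(4m_sm_l)^{-m_sN}$ cancels $(2\sqrt{m_sm_l})^{2m_sN}$. This leaves
\[
F_Q^{0^+}(q)\approx\frac{\tilde m^N q^{2m_sN}}{\Gamma(2m_sN+1)},
\]
where we used $\Gamma(s)\,s=\Gamma(s+1)$. Substituting $q=\tilde c_1(\kappa\rho)^{-1/2}$ gives $q^{2m_sN}=\tilde c_1^{2m_sN}\kappa^{-m_sN}\rho^{-m_sN}$, which is the claimed expression for $\mathbb{P}_1^\infty$ and shows diversity order $m_sN$.

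The main obstacle is making the neglect of the first integral rigorous. One has to check that its leading order really is $\rho^{-m_sN-1}$, i.e.\ that the argument of $F_Q$ stays $O(\rho^{-1/2})$ uniformly over $y\in[0,\tilde c_2]$. This is immediate from monotonicity of $F_Q$ and the bound above. One also has to justify using the small-$q$ form of Lemma~\ref{lemma-high-snr} for the CDF rather than only its limit, which we accept as that lemma's asymptotic statement.
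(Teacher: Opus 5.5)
Your proposal is correct and follows essentially the same route as the paper. Both split $\mathbb{P}_1^{\infty}$ at the breakpoint $\tilde{c}_2$ and then extract the leading term of the small-$q$ form $F_Q^{0^+}$ from Lemma~\ref{lemma-high-snr} at $q_1=\tilde{c}_1/\sqrt{\kappa\rho}$; the paper expands $\gamma(\cdot,\cdot)$ and the exponential as series, which amounts to your $\gamma(s,x)\approx x^s/s$. Your monotonicity bound $\frac{\tilde{c}_2}{N}F_Q(\tilde{c}_1/\sqrt{\kappa\rho})=O(\rho^{-m_sN-1})$ on the integral over $[0,\tilde{c}_2]$ is a cleaner justification for discarding that term than the paper's brief remark about the order of $\rho$ inside the square root.
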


\begin{proof}\label{proof-high-snr}
By expanding the lower incomplete gamma function in \eqref{eq-links-A-R-B-high-cdf}, we have
\begin{equation}\label{eq-high-snr-op-proof-1}
\begin{split}
F_{Q}^{0^+}(q) = \tilde{m}^{N} \sum_{l=0}^{\infty}\frac{(2m_{s}m_{l})^{l}q^{l+2m_{s}N}}{\Gamma(l+2m_{s}N)+1}e^{-2\sqrt{m_{s}m_{l}}q},
\end{split}
\end{equation}
By applying the exponential function's expansion, we have
\begin{equation}\label{eq-high-snr-op-proof-2}
\begin{split}
F_{Q}^{0^{+}}(q)= & \tilde{m}^{N} \sum_{l=0}^{\infty} \frac{\left(2 \sqrt{m_{s} m_{l}}\right)^{l} q^{l+2 m_{s} N}}{\Gamma\left(l+2 m_{s} N+1\right)} \\ & \times \sum_{i=0}^{\infty} \frac{\left(-2 \sqrt{m_{s} m_{l}}\right)^{i} q^{i}}{i!}.
\end{split}
\end{equation}
Meanwhile, the OPs for $\rho \to \infty$ can be derived as
\begin{equation}\label{eq-high-snr-op-proof-3}
\begin{split}
\mathbb{P}_1^{\infty} &= 
\Pr\left(q^2 < \left\{\frac{\tilde{c}_1^2}{\kappa \rho}, \frac{\tilde{c}_1^2}{\rho} + \frac{2^{\frac{\bar{R}}{B}}\mathcal{L}_2Y}{\mathcal{L}_1} \right\} \right)  \\
&=\int_{\tilde{c}_2}^{\infty} \frac{e^{-\frac{y}{N}}}{N} F_{Q}^{0^{+}}(\frac{\tilde{c}_1}{\sqrt{ \kappa \rho}})dy \\ 
&+ \int_{0}^{\tilde{c}_2} \frac{e^{-\frac{y}{N}}}{N} F_{Q}^{0^{+}}(\sqrt{ \frac{\tilde{c}_1^2}{ \rho} + \frac{2^{\frac{\bar{R}}{B}}\mathcal{L}_2Y}{\mathcal{L}_1}})dy,
\end{split}
\end{equation}
In the expansion of $\sqrt{ \frac{\tilde{c}_1^2}{ \rho} + \frac{2^{\frac{\bar{R}}{B}}\mathcal{L}_2Y}{\mathcal{L}_1}}$, the highest order of $\rho$ is zero. 
Finally, by substituting $q_1 = \frac{\tilde{c}_1}{\sqrt{\kappa \rho}}$ into \eqref{eq-high-snr-op-proof-2}, extracting the leading-order term, we can approximate $\mathbb{P}_1^{\infty}$ as \eqref{eq-high-snr-op}.
This completes the proof.
\end{proof}

\begin{corollary}\label{corollary-diversity-order}
In the considered JSACC system, the diversity order of Bob is given by $\mathcal{D}_1 = m_sN$.
\end{corollary}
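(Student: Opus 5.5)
The plan is to read the diversity order directly off the high-SNR approximation in Corollary~\ref{proposition-diversity-order}, using the standard definition
\begin{equation*}
\mathcal{D}_1 = -\lim_{\rho\to\infty}\frac{\log \mathbb{P}_1^{\infty}}{\log \rho}.
\end{equation*}

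\textbf{Step 1: take the limit.} By \eqref{eq-high-snr-op}, $\mathbb{P}_1^{\infty} = C\rho^{-m_sN}$ with
\begin{equation*}
C=\frac{\tilde{m}^{N}\tilde{c}_{1}^{2m_{s}N}\kappa^{-m_sN}}{\Gamma(2m_sN+1)}.
\end{equation*}
The constant $C$ is positive, finite, and independent of $\rho$. It depends only on $N$, $m_s$, $m_l$, $\kappa$, $\beta$, $\mathcal{L}_1$, $\bar{R}$ and $B$. Hence
\begin{equation*}
\log \mathbb{P}_1^{\infty} = \log C - m_sN\log\rho,
\end{equation*}
and dividing by $\log\rho$ and letting $\rho\to\infty$ gives $\mathcal{D}_1=m_sN$.

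\textbf{Step 2: justify that \eqref{eq-high-snr-op} is the true leading term.} The calculation in Step 1 only matters if $\mathbb{P}_1^{\infty}$ captures the exact order of decay of $\mathbb{P}_1$. In the proof of Corollary~\ref{proposition-diversity-order}, $\mathbb{P}_1$ was written as the probability that $Q^2$ falls below $\min\{\tilde c_1^2/(\kappa\rho),\,\tilde c_1^2/\rho + c Y\}$.

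Since $\kappa\in(0,1]$, the first argument tends to zero while the second stays of order $Y$. So on the event $Y>\tilde c_2$, which has probability $e^{-\tilde c_2/N}\to 1$ because $\tilde c_2\propto\rho^{-1}$, the binding constraint is $Q<\tilde c_1/\sqrt{\kappa\rho}$.

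The complementary region $y\in[0,\tilde c_2]$ has probability $O(\rho^{-1})$. On that region the CDF argument is at most $\tilde c_1/\sqrt{\kappa\rho}$ by the $\min$, so this region contributes at most $O(\rho^{-1})\cdot F_Q^{0^+}(\tilde c_1/\sqrt{\kappa\rho})$. That is of strictly higher order than the main term.

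\textbf{Step 3: extract the small-$q$ behaviour of $F_Q$.} Expanding \eqref{eq-high-snr-op-proof-2} for small $q$, the lowest power is $q^{2m_sN}$, and all other terms carry higher powers of $q$. Substituting $q=\tilde c_1/\sqrt{\kappa\rho}$ then yields exactly $C\rho^{-m_sN}(1+o(1))$.

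Combining Steps 2 and 3 gives the sandwich
\begin{equation*}
C'\rho^{-m_sN}\le \mathbb{P}_1\le C''\rho^{-m_sN}
\end{equation*}
for large $\rho$ and positive constants $C',C''$. The limit in Step 1 is therefore unaffected by the approximation.

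\textbf{Main obstacle.} The hard part is Step 2: ruling out that the second integral, where the secrecy constraint binds, decays more slowly than $\rho^{-m_sN}$. The plan handles this with the pointwise domination by $F_Q^{0^+}(\tilde c_1/\sqrt{\kappa\rho})$ on that region, together with monotonicity of the CDF. This bound also keeps the argument independent of the loose statement in the earlier proof that the expansion of the square-root term has highest $\rho$-order zero.
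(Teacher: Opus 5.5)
\textbf{Gap: the case $m_{AR}=m_{RB}$ is not covered.} For $m_{AR}\neq m_{RB}$ your argument is correct. It follows the paper's route of reading the exponent off Corollary~\ref{proposition-diversity-order}, and your Step~2 is a useful tightening of the paper's loose remark about the square-root term. However, the statement is unconditional, whereas Lemma~\ref{lemma-high-snr} and Corollary~\ref{proposition-diversity-order} hold only for $m_{RB}\neq m_{AR}$. The paper treats $m_{AR}=m_{RB}$ as a separate case; your proposal never addresses it.

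Your Step~1 claim that $C$ is finite fails there, because $\tilde m$ contains $\Gamma(2m_l-2m_s)=\Gamma(0)$. You also cannot simply let $m_l\to m_s$, since the asymptotic is not uniform in $m_l-m_s$. In fact, when $m_{AR}=m_{RB}=m$, each product $|h_{RB,n}||h_{AR,n}|$ has a $K_0$-Bessel density behaving like $z^{2m-1}\log(1/z)$ near zero. Hence $F_Q(q)\asymp q^{2mN}(\log(1/q))^N$ and $\mathbb{P}_1\asymp\rho^{-mN}(\log\rho)^N$. So your concluding sandwich $C'\rho^{-m_sN}\le\mathbb{P}_1\le C''\rho^{-m_sN}$ cannot hold in this case. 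The result $\mathcal{D}_1=m_sN$ is still true, because the logarithmic factor disappears in $\log\mathbb{P}_1/\log\rho$, but it needs its own argument.

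\textbf{A fix covering both cases, without Lemma~\ref{lemma-high-snr}.}
\begin{itemize}
\item Since $\kappa\le 1$ and $Y\ge 0$, the threshold $\min\{\tilde c_1^2/(\kappa\rho),\,\tilde c_1^2/\rho+cY\}$ (with $c$ as in your Step~2) lies in $[\tilde c_1^2/\rho,\,\tilde c_1^2/(\kappa\rho)]$. Therefore $F_Q(\tilde c_1/\sqrt{\rho})\le\mathbb{P}_1\le F_Q(\tilde c_1/\sqrt{\kappa\rho})$, and this does not even require independence of $Q$ and $Y$.
\item With $Z_n=|h_{RB,n}||h_{AR,n}|$ i.i.d.\ and nonnegative, $\prod_n\Pr(Z_n<q/N)\le F_Q(q)\le\prod_n\Pr(Z_n<q)$.
\item Let $U$ and $V$ denote the factors with parameters $m_s$ and $m_l$. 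Then $\Pr(Z_n<q)\ge\Pr(U<q)\Pr(V<1)$, which is of order $q^{2m_s}$.
\item Since $\Pr(U<t)\le c_0t^{2m_s}$ for all $t\ge 0$, you get $\Pr(Z_n<q)\le\mathbb{E}[\min\{1,c_0(q/V)^{2m_s}\}]$. This is $O(q^{2m_s})$ if $m_s<m_l$ and $O(q^{2m_s}\log(1/q))$ if $m_s=m_l$.
\end{itemize}
Combining these gives $\log\mathbb{P}_1=-m_sN\log\rho+O(\log\log\rho)$ in every case, so $\mathcal{D}_1=m_sN$.
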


\begin{proof}\label{proof-diversity-order}
According to Corollary~\ref{proposition-diversity-order}, the diversity order is $m_sN$ when $m_{AR} \neq m_{RB}$. 
When $m_{AR} = m_{RB}$, applying the limit yields the same result, i.e., $m_sN$. 
The proof is complete.
\end{proof}

\begin{remark}\label{remark-3}
The diversity order of Bob in the JSACC system depends on Nakagami fading parameters and the number of RIS reflecting elements.
\end{remark}

\subsubsection{Security Rate}
To measure security performance, we propose a new performance metric $\vartheta_{fixed} = \vartheta_{sc} + \vartheta_{cc}$ called security rate, which is the probability that Willie cannot decode the signal.

\begin{remark}\label{corollary-switch-probability}
For a fixed-rate JSACC system, the switch probability $\mathbb{P}$ is numerically equal to the OP in the conventional SC system, since we set the target rate as the switch threshold.
\end{remark}

First, when the JSACC system is in SC mode, Willie cannot demodulate the transmitted signal.
According to Remark~\ref{corollary-switch-probability}, we can obtain that $\vartheta_{sc} = 1-\mathbb{P}_2$, where $\mathbb{P}_2$ is given by \eqref{eq-tra-low-snr-op}.
Second, we need to obtain the average minimum detection error probability of Willie when the JSACC system is in the CC mode, and Willie has a probability of detecting the transmitted message.
According to \cite{cheng2023performance}, given a detection threshold $\tau$,  the DEP can be characterized by two cases as follows.\\
\textbf{Case 1:} When $\tau_2 < \tau_3$, Willie's DEP is given by
\begin{equation}\label{eq-dep-case-1}
\begin{split}
\varsigma = \left\{\begin{matrix}
1,  & \tau < \tau_1, \\
1-\frac{\tau - \zeta_2}{\zeta_1 P_J^{max}}, & \tau_1 \le \tau < \tau_2, \\
0,  & \tau_2, \le \tau < \tau_3, \\
\frac{\tau-\zeta_3}{\zeta_1-P_J^{max}},   & \tau_3 \le \tau < \tau_4, \\
1,  & \tau \ge \tau_4.
\end{matrix}\right.
\end{split}
\end{equation}
\\
\textbf{Case 2:} When $\tau_2 > \tau_3$, Willie's DEP is given by
\begin{equation}\label{eq-dep-case-2}
\begin{split}
\varsigma = \left\{\begin{matrix}
1,  & \tau < \tau_1, \\
1-\frac{\tau - \zeta_2}{\zeta_1 P_J^{max}}, & \tau_1 \le \tau < \tau_3, \\
1 - \frac{\zeta_3 - \zeta_2}{\zeta_1 P_J^{max}},  & \tau_3, \le \tau < \tau_2, \\
\frac{\tau-\zeta_3}{\zeta_1-P_J^{max}},   & \tau_2 \le \tau < \tau_4, \\
1,  & \tau \ge \tau_4,
\end{matrix}\right.
\end{split}
\end{equation}
where $\tau_1 = \zeta_2$, $\tau_2 = \zeta_1 P_J^{max} + \zeta_2$, $\tau_3 = \zeta_3$, and $\tau_4 = \zeta_1 P_J^{max} + \zeta_3$.

It should be noted that the DEP discussed above is derived under a fixed detection threshold $\tau$.
However, Willie may adjust the threshold to minimize the DEP, representing the worst scenario for CC.
Accordingly, it is essential to evaluate performance under this condition. 
Willie's optimal threshold for minimizing the detection probability is given by
\begin{equation}\label{eq-opt-tau}
\begin{split}
\tau^* = \left\{\begin{matrix}
\left [ \tau_2,\tau_3 \right ],   & \tau_2 \le \tau_3, \\
\left [ \tau_3,\tau_2 \right ],  & \tau_2 > \tau_3.
\end{matrix}\right.
\end{split}
\end{equation}
Using the optimal threshold, 
Willie's minimum DEP is expressed as
\begin{equation}\label{eq-opt-dep}
\begin{split}
\varsigma^* = \left\{\begin{matrix}
0,   & \tau_2 \le \tau_3, \\
1 - \frac{\zeta_3 - \zeta_2}{\zeta_1 P_J^{max}},  & \tau_2 > \tau_3.
\end{matrix}\right.
\end{split}
\end{equation}

\begin{theorem}\label{theorem-amdep}
Bob's security rate in the JSACC system can be given by
\begin{equation}\label{eq-amdep}
\begin{split}
\vartheta_{fixed} = 1 - \frac{\mathbb{P}_2\eta_2 \log\left( 1 + \frac{\eta_1}{\eta_2} \right)}{\eta_1},
\end{split}
\end{equation}
where $\eta_1 = P_J^{max}\mathcal{L}_4$ and $\eta_2 = P_A \mathcal{L}_2$. Specifically, $\mathbb{P}_2$ is the OP of Bob in the conventional SC system.
\end{theorem}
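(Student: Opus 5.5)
We would prove the statement by splitting the security rate as $\vartheta_{fixed}=\vartheta_{sc}+\vartheta_{cc}$. By Remark~\ref{corollary-switch-probability}, the system is in SC mode with probability $1-\mathbb{P}_2$, and in that mode Willie cannot decode, so $\vartheta_{sc}=1-\mathbb{P}_2$. The CC mode is active with probability $\mathbb{P}_2$. In that mode we take $\vartheta_{cc}=\mathbb{P}_2\,\mathbb{E}[\varsigma^*]$, i.e.\ the switch probability times Willie's average minimum DEP from \eqref{eq-opt-dep}. We would treat the CC-mode channel gains $|g_{ARW}|^2$ and $|g_{JRW}|^2$ as independent of the switching event, so that the expectation factorizes. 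The claim then reduces to showing
\begin{equation*}
\mathbb{E}[\varsigma^*] = 1-\frac{\eta_2}{\eta_1}\log\left(1+\frac{\eta_1}{\eta_2}\right),
\end{equation*}
because $1-\mathbb{P}_2+\mathbb{P}_2\bigl(1-\frac{\eta_2}{\eta_1}\log(1+\frac{\eta_1}{\eta_2})\bigr)$ is exactly \eqref{eq-amdep}.

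For the expectation, write $\zeta_3-\zeta_2=P_A\mathcal{L}_2\beta^2 Y$ and $\zeta_1=\mathcal{L}_4\beta^2 Z$, where $Y=|g_{ARW}|^2$ and $Z=|g_{JRW}|^2$ are i.i.d.\ exponential with mean $N$ by \eqref{eq-links-other-pdf}. The factor $\beta^2$ cancels in the ratio. The condition $\tau_2>\tau_3$ is equivalent to $\eta_2 Y<\eta_1 Z$, and on that event $\varsigma^*=1-\frac{\eta_2 Y}{\eta_1 Z}$; otherwise $\varsigma^*=0$. Let $U=Y/Z$. As a ratio of i.i.d.\ exponentials, $U$ has CDF $F_U(u)=\frac{u}{1+u}$, independent of $N$. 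Setting $a=\eta_1/\eta_2$, we need
\begin{equation*}
\mathbb{E}[\varsigma^*]=\int_0^{a}\left(1-\frac{u}{a}\right)\frac{du}{(1+u)^2}.
\end{equation*}
Integrating by parts, or equivalently using $\mathbb{E}[(1-U/a)^+]=\frac{1}{a}\int_0^a F_U(u)\,du$, gives $\frac{1}{a}\int_0^a\frac{u}{1+u}\,du=1-\frac{1}{a}\log(1+a)$. This is the required form.

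The main obstacle is justifying the factorization $\vartheta_{cc}=\mathbb{P}_2\,\mathbb{E}[\varsigma^*]$. The switching event depends on $|g_{ARW}|^2$ through $R_W^{SC}$, so $Y$ is not strictly independent of the event that CC mode is active. We would adopt the paper's level of approximation, treating Willie's DEP statistics as unconditioned on the mode, and flag it explicitly as an approximation. A secondary point is to interpret $\log$ in \eqref{eq-amdep} as the natural logarithm, which is what the integral produces.
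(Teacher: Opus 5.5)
Your proposal is correct and follows the paper's proof in Appendix~\ref{Appen-thro-2}: the paper likewise writes the security rate as $1-\mathbb{P}_2$ plus $\mathbb{P}_2$ times Willie's average minimum DEP. It obtains that DEP by directly evaluating the double integral of $1-\frac{\eta_2 y}{\eta_1 x}$ against two independent exponential densities over $y<\eta_1 x/\eta_2$, which your reduction to the ratio $U=Y/Z$ computes more compactly. Your caveat that $|g_{ARW}|^2$ also enters the switching event flags an approximation the paper makes silently. Your combination $1-\mathbb{P}_2+\mathbb{P}_2\,\mathbb{E}[\varsigma^*]$ is the one that actually yields \eqref{eq-amdep}; the appendix's final line, read with $\varsigma^*_a=\mathbb{P}_2\varsigma^*_c$, counts $\mathbb{P}_2$ twice.
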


\begin{proof}\label{proof-amdep}
See Appendix \ref{Appen-thro-2}.
\end{proof}

\subsection{The Conventional SC System}
\begin{theorem}\label{theorem-low-snr-tra}
In the conventional SC system, the OP of Bob is approximated as
\begin{equation}\label{eq-tra-low-snr-op}
\begin{split}
\mathbb{P}_2 \approx e^{-\frac{\lambda }{2} }\sum_{k=1}^{K}\omega_k\sum_{i=0}^{\infty } \frac{\lambda ^{i}\gamma (i+\frac{1}{2},\frac{\varpi_1}{2\rho} + \frac{2^{\frac{\bar{R}}{B}-1}\mathcal{L}_2t_{k}}{\mathcal{L}_{1}\left(1-\mu\right)} )}{i!2^{i}\Gamma (i+\frac{1}{2} )},
\end{split}
\end{equation}
where $K$ is the number of nodes for the Gauss-Laguerre quadrature, $t_{k}$ and $\omega_k$ are the roots and corresponding weights of the $k$th-order Laguerre polynomial, respectively.
\end{theorem}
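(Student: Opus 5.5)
The plan is to follow the same route as the proof of Theorem~\ref{theorem-low-snr}, keeping only the secrecy-outage event and dropping the covert branch. In the conventional SC system an outage occurs exactly when $R_s<\bar{R}$. Using \eqref{eq-Bob-rate-sc}--\eqref{eq-rate-sc}, this is the event $\frac{1+SINR_{B}^{SC}}{1+SINR_{W}^{SC}}<2^{\frac{\bar{R}}{B}}$, so
\begin{equation*}
\mathbb{P}_2=\Pr\left(1+SINR_{B}^{SC}<2^{\frac{\bar{R}}{B}}\left(1+SINR_{W}^{SC}\right)\right).
\end{equation*}

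The first step is to substitute \eqref{eq-Bob-sinr-sc}, \eqref{eq-Willie-sinr-sc} and \eqref{eq-links-A-R-B-2}, with $\rho=P_t/\sigma_B^2$ (and $\sigma_W^2=\sigma_B^2$, as implicitly assumed in Theorem~\ref{theorem-low-snr}). Normalizing as in Lemma~\ref{lemma-low-snr}, the event becomes
\begin{equation*}
\left|g_{ARB}\right|^2<\frac{\varpi_1}{\rho}+\frac{2^{\frac{\bar{R}}{B}}\mathcal{L}_2 Y}{\mathcal{L}_1 N(1-\mu)},
\end{equation*}
where $Y=|g_{ARW}|^2$. Here $\left|g_{ARB}\right|^2$ is noncentral chi-square by Lemma~\ref{lemma-low-snr}, and $Y$ is exponential with mean $N$ by \eqref{eq-links-other-pdf}. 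I treat the two cascaded gains as independent, as the paper does throughout.

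The second step is to condition on $Y$:
\begin{equation*}
\mathbb{P}_2=\int_0^\infty \frac{1}{N}e^{-\frac{y}{N}}F_{\left|g_{ARB}\right|^2}\left(\frac{\varpi_1}{\rho}+\frac{2^{\frac{\bar{R}}{B}}\mathcal{L}_2 y}{\mathcal{L}_1N(1-\mu)}\right)dy.
\end{equation*}
Substituting $t=y/N$ turns this into $\int_0^\infty e^{-t}F(\cdot)\,dt$, where the $N$ cancels inside the argument. This is exactly the weight function of Gauss--Laguerre quadrature, so it is approximated by $\sum_{k=1}^{K}\omega_k F(\cdot)\big|_{t=t_k}$.

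The third step is to insert the series CDF \eqref{eq-links-A-R-B-low-cdf}. Its argument is halved inside $\gamma(i+\frac12,\cdot)$, which produces $\frac{\varpi_1}{2\rho}+\frac{2^{\frac{\bar{R}}{B}-1}\mathcal{L}_2 t_k}{\mathcal{L}_1(1-\mu)}$. Pulling $e^{-\lambda/2}$ outside the sums gives \eqref{eq-tra-low-snr-op}.

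The main obstacle is bookkeeping rather than analysis. The normalizations of $X$ and $Y$ (the factors $\beta^2$ and $N(1-\mu)$) must be made consistent with how $\varpi_1$ is defined, so that the coefficient of $t_k$ comes out as stated. Beyond that, the quadrature step is only an approximation; its justification is the smoothness and boundedness of the integrand against the $e^{-t}$ weight.
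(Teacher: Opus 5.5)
Your proposal is correct and follows essentially the same route as the paper: write the outage as $\frac{1+SINR_B^{SC}}{1+SINR_W^{SC}}<2^{\frac{\bar{R}}{B}}$, condition on $|g_{ARW}|^2$ to get a single integral of $F_{|g_{ARB}|^2}$ against $\frac{1}{N}e^{-\frac{y}{N}}$, substitute $t=y/N$, and apply Gauss--Laguerre quadrature with the series CDF of Lemma~\ref{lemma-low-snr}. The normalization bookkeeping you flagged does work out: with $\varpi_1$ as defined, the threshold is exactly $\frac{\varpi_1}{\rho}+\frac{2^{\frac{\bar{R}}{B}}\mathcal{L}_2 y}{\mathcal{L}_1 N(1-\mu)}$, so halving it gives the stated argument of $\gamma(i+\frac{1}{2},\cdot)$.
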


\begin{proof}\label{proof-low-snr-tra}
According to \eqref{eq-rate-sc}, $\mathbb{P}_2$ is given by
\begin{equation}\label{eq-tra-low-snr-op-proof-1}
\begin{split}
\mathbb{P}_2 &=  \Pr\left( R_s < \bar{R} \right) 
= \Pr\left(\frac{1+SINR_{B}^{SC}}{1+SINR_{W}^{SC}}<2^{\frac{\bar{R}}{B}}\right) \\ &
= \int_{0}^{\infty}\int_{0}^{\frac{\varpi_1}{\rho} + \frac{2^{\frac{\bar{R}}{B}}\mathcal{L}_2y}{\mathcal{L}_{1}N\left(1-\mu\right)}} f_{\left | g_{ARB} \right |^{2}}(x)f_{\left | g_{ARW} \right |^{2}}(y)dxdy  \\ &
= \int_{0}^{\infty}F_{\left | g_{ARB} \right |^{2}}\left(\frac{\varpi_1}{\rho} + \frac{2^{\frac{\bar{R}}{B}}\mathcal{L}_2y}{\mathcal{L}_{1}N\left(1-\mu\right)}\right)\frac{e^{-\frac{y}{N}}}{N}dy.
\end{split}
\end{equation}
By substituting $t = \frac{y}{N}$ into \eqref{eq-tra-low-snr-op-proof-1} and according to Gauss-Laguerre, we can approximate $\mathbb{P}_2$ as \eqref{eq-tra-low-snr-op}.
This completes the proof.
\end{proof}

\begin{corollary}\label{proposition-diversity-order-tra}
In the high-SNR regime, the approximation of $\mathbb{P}_2$ are given by
\begin{equation}\label{eq-tra-high-snr-op}
\begin{split}
\mathbb{P}_2^{\infty} = e^{-\frac{\lambda }{2} }\sum_{k=1}^{K}\omega_k\sum_{i=0}^{\infty } \frac{\lambda ^{i}\gamma (i+\frac{1}{2},\frac{2^{\frac{\bar{R}}{B}-1}\mathcal{L}_2t_{k}}{\mathcal{L}_{1}\left(1-\mu\right)} )}{i!2^{i}\Gamma (i+\frac{1}{2} )}.
\end{split}
\end{equation}
\end{corollary}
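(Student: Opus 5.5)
The plan is to start from the Gauss--Laguerre representation of $\mathbb{P}_2$ in Theorem~\ref{theorem-low-snr-tra} and let $\rho\to\infty$ termwise. In \eqref{eq-tra-low-snr-op}, $\rho$ enters only through the second argument of each lower incomplete gamma function, namely the additive term $\frac{\varpi_1}{2\rho}$. All other quantities are independent of $\rho$: the nodes $t_k$, the weights $\omega_k$, $\lambda$, $\mu$, and the path losses. So the limit reduces to studying $\gamma\!\left(i+\tfrac12,\, a/\rho + b t_k\right)$ with $a=\varpi_1/2$ and $b$ the coefficient multiplying $t_k$.

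The key steps, in order, are as follows.

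First, I would record the structural fact that the secrecy outage event $\frac{1+SINR_B^{SC}}{1+SINR_W^{SC}}<2^{\bar R/B}$ becomes, as $\rho\to\infty$, the event that the ratio of the two channel gains falls below a constant. This event does not vanish with SNR, so one should expect a nonzero floor rather than a power law in $\rho$. This is also why Lemma~\ref{lemma-high-snr} is not needed here: the CDF of $|g_{ARB}|^2$ is evaluated at a point of order one, not near $0^+$. The CLT-based Lemma~\ref{lemma-low-snr} therefore remains valid, consistent with Remark~\ref{remark-low-snr}.

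Second, I would use continuity of $\gamma(s,\cdot)$ in its second argument. Each term converges to $\gamma\!\left(i+\tfrac12, b t_k\right)$.

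Third, I would justify interchanging the limit with the infinite sum over $i$ by dominated convergence. The bound
\begin{equation*}
\gamma\!\left(i+\tfrac12,x\right)\le \Gamma\!\left(i+\tfrac12\right)
\end{equation*}
shows that each summand is bounded by $\lambda^i/(i!\,2^i)$. These bounds are summable, with total $e^{\lambda/2}$. The sum over $k$ is finite, so no interchange issue arises there.

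Alternatively, the same conclusion could be reached at the level of the integral in \eqref{eq-tra-low-snr-op-proof-1}. There one applies monotone convergence to $F_{|g_{ARB}|^2}\!\left(\frac{\varpi_1}{\rho}+c\,y\right)\downarrow F_{|g_{ARB}|^2}(c\,y)$ and then applies the same Gauss--Laguerre discretization with $t=y/N$. This yields exactly \eqref{eq-tra-high-snr-op}.

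The main obstacle is only bookkeeping, not analysis. The constant multiplying $t_k$ must match \eqref{eq-tra-low-snr-op} after the substitution $t=y/N$ and the factor $\tfrac12$ inside the incomplete gamma argument. I would recheck that change of variables carefully, because the $N$ in the denominator cancels against the scaling of $y$. Finally, I would remark that, since the limit is a positive constant, the diversity order of the conventional SC system is zero. This contrasts with Corollary~\ref{corollary-diversity-order}.
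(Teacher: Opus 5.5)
Your proposal is correct and is essentially the paper's argument. The paper takes a first-order Taylor expansion of each Gauss--Laguerre term about the $\rho$-independent point $\frac{2^{\bar{R}/B}\mathcal{L}_2 t_k}{\mathcal{L}_1(1-\mu)}$ and then drops the $(\varpi_1/\rho)\, f_{|g_{ARB}|^2}(\cdot)$ correction, which is exactly your limit $\rho\to\infty$; your continuity and dominated-convergence step for the sum over $i$ is a more careful version of this, while the paper's Taylor form also shows the floor is approached at rate $O(1/\rho)$.
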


\begin{proof}\label{proof-high-snr-tra}
By expanding the expression in \eqref{eq-tra-low-snr-op} from point $x = \frac{2^{\frac{\bar{R}}{B}}\mathcal{L}_2t_{k}}{\mathcal{L}_{1}\left(1-\mu\right)}$ using the first-order Taylor expansion, we have
\begin{equation}\label{eq-tra-high-snr-op-proof-1}
\begin{split}
\mathbb{P}_2^{\infty} \approx \sum_{k=1}^{K}\omega_k\Bigg(F_{\left | g_{ARB} \right |^{2}}\left(  \frac{2^{\frac{\bar{R}}{B}}\mathcal{L}_2t_{k}}{\mathcal{L}_{1}\left(1-\mu\right)}\right)\\+\left(\frac{\varpi_1}{\rho}\right)f_{\left | g_{ARB} \right |^{2}}\left(\frac{2^{\frac{\bar{R}}{B}}\mathcal{L}_2t_{k}}{\mathcal{L}_{1}\left(1-\mu\right)}\right)\Bigg),
\end{split}
\end{equation}
Finally, we can approximate $\mathbb{P}_2^{\infty}$ as \eqref{eq-tra-high-snr-op}.
This completes the proof.
\end{proof}

\begin{corollary}\label{corollary-diversity-order-tra}
In the conventional SC system, the diversity order of Bob is given by $\mathcal{D}_2 = 0$.
\end{corollary}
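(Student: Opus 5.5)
The plan is to read the diversity order off the high-SNR limit in Corollary~\ref{proposition-diversity-order-tra}. We use the standard definition
\begin{equation*}
\mathcal{D}_2=-\lim_{\rho\to\infty}\frac{\log \mathbb{P}_2^{\infty}}{\log\rho},
\end{equation*}
so it suffices to show that $\mathbb{P}_2$ converges to a strictly positive constant independent of $\rho$. The expression \eqref{eq-tra-high-snr-op} already has the $\rho$-dependent term $\frac{\varpi_1}{2\rho}$ removed. Hence $\mathbb{P}_2^{\infty}$ is a fixed number determined only by $\lambda$, $N$, $\mathcal{L}_1$, $\mathcal{L}_2$, $\bar{R}$, $B$ and $\mu$. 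Because $\log$ of a positive constant divided by $\log\rho$ tends to zero, this gives $\mathcal{D}_2=0$.

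To make this rigorous without relying on the Gauss--Laguerre approximation, we would work directly from the exact integral in \eqref{eq-tra-low-snr-op-proof-1}:
\begin{equation*}
\mathbb{P}_2=\int_{0}^{\infty}F_{|g_{ARB}|^{2}}\!\left(\frac{\varpi_1}{\rho}+\frac{2^{\frac{\bar{R}}{B}}\mathcal{L}_2y}{\mathcal{L}_{1}N(1-\mu)}\right)\frac{e^{-\frac{y}{N}}}{N}\,dy .
\end{equation*}
The integrand is monotonically decreasing in $\rho$ and bounded by $e^{-y/N}/N$. By dominated (or monotone) convergence, as $\rho\to\infty$ it converges to
\begin{equation*}
\int_{0}^{\infty}F_{|g_{ARB}|^{2}}\!\left(c\,y\right)\frac{e^{-y/N}}{N}\,dy,\qquad c=\frac{2^{\frac{\bar{R}}{B}}\mathcal{L}_2}{\mathcal{L}_{1}N(1-\mu)}>0 .
\end{equation*}
This limit also lower-bounds $\mathbb{P}_2$ for every $\rho$. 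Physically, this reflects that $SINR_B^{SC}$ and $SINR_W^{SC}$ both scale with $\rho$, so the secrecy outage event tends to the $\rho$-free event $|g_{ARB}|^2 (1-\mu)\mathcal{L}_1 N< 2^{\bar{R}/B}\mathcal{L}_2|g_{ARW}|^2$.

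The only real obstacle is showing that this limit is strictly positive, because a zero limit would leave the diversity order undetermined. We would argue as follows. The CDF $F_{|g_{ARB}|^{2}}(x)$ in Lemma~\ref{lemma-low-snr} is strictly positive for every $x>0$, since each term $\gamma(i+\tfrac12,x/2)>0$. The weight $e^{-y/N}/N$ is positive on $(0,\infty)$. Therefore the integral is strictly positive. As a cross-check, we would note that Lemma~\ref{lemma-high-snr} gives a positive density of $Q$ near $0^+$, so the conclusion does not depend on the CLT approximation. Hence $\mathbb{P}_2\ge C>0$ for all $\rho$, so $\log\mathbb{P}_2$ stays bounded, and $\mathcal{D}_2=0$.
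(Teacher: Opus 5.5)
Your proposal is correct and follows the paper's implicit argument. The paper gives no separate proof; it reads $\mathcal{D}_2=0$ off the fact that $\mathbb{P}_2^{\infty}$ in \eqref{eq-tra-high-snr-op} no longer depends on $\rho$, and your dominated-convergence limit of the exact integral, together with the check that this limit is strictly positive (so $\log\mathbb{P}_2$ stays bounded), makes explicit what the paper leaves unstated.
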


\section{Performance Analysis for Adaptive-Rate Communications}\label{section-er}
In this section, we focus on the performance analysis under adaptive-rate communications. 
Specifically, we derive the closed-form expressions for both ERs and high-SNR slopes of the JSACC and the conventional SC systems, which are critical performance metrics.

\subsection{Joint Secrecy And Covert Communications}
\subsubsection{Ergodic Rate}
The ER of Bob in the JSACC system is presented in the following theorem.
\begin{theorem}\label{theorem-er-jsacc}
In the considered JSACC system, the ER of Bob is given by
\begin{equation}\label{eq-er-jsacc}
\begin{split}
R_1 = \left(1-\mathbb{P}\right)\sum_{i=1}^{K}w_{i}\left(\Phi_2(x_i)-\Phi_3(x_i)\right) + \mathbb{P}\sum_{i=1}^{K}w_{i}\Phi_4(x_i),
\end{split}
\end{equation}
where $\Phi_2(x_i)=\log_{2}\left(1 + \varpi_3 \rho x_i\right) f_{\left | g_{ARB} \right |^{2}}\left(x_i\right)e^{x_i}$, $\Phi_3(x_i)=\log_{2}\left(1 + \varpi_4 \rho x_i\right) f_{\left | g_{ARW} \right |^{2}}\left(x_i\right)e^{x_i}$, $\Phi_4(x_i)=\log_{2}\left(1 + \varpi_3 \kappa \rho x_i\right) f_{\left | g_{ARB} \right |^{2}}\left(x_i\right)e^{x_i}$, $\varpi_3 = \mathcal{L}_1 \beta^2 N (1-\mu)$, $\varpi_4 = \mathcal{L}_2 \beta^2$, and $\mathbb{P} = e^{-\frac{\lambda }{2} }\sum_{k=1}^{K}\omega_k\sum_{i=0}^{\infty } \frac{\lambda ^{i}\gamma (i+\frac{1}{2},\frac{\varpi_1}{2\rho} + \frac{2^{\backepsilon-1}\mathcal{L}_2t_{k}}{\mathcal{L}_{1}\left(1-\mu\right)} )}{i!2^{i}\Gamma (i+\frac{1}{2} )}$.
\end{theorem}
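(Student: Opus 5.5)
The plan is to split the ergodic rate according to the operating mode and evaluate each mode's contribution with a Gauss--Laguerre rule. Write $R_1 = \Pr(\mathcal{G}_0)\,\mathbb{E}[R^{SC}] + \Pr(\mathcal{G}_1)\,\mathbb{E}[R_B^{CC}]$. Here $\mathcal{G}_1$ is the event $R_s<\backepsilon$ from \eqref{eq-compare-condition}. I would treat the mode indicator as approximately independent of the instantaneous rate inside each mode; this is the heuristic decoupling the statement implicitly uses. Then the mode probabilities factor out. By Remark~\ref{corollary-switch-probability}, the switch probability $\mathbb{P}=\Pr(\mathcal{G}_1)$ equals the OP of the conventional SC system with target rate $\backepsilon$. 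It is therefore given directly by Theorem~\ref{theorem-low-snr-tra}, with $\bar R$ replaced by the threshold, and this reproduces the stated $\mathbb{P}$. Hence $\Pr(\mathcal{G}_0)=1-\mathbb{P}$.

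For the SC term I would use the bound $\mathbb{E}[R_s]\approx \mathbb{E}[R_B^{SC}]-\mathbb{E}[R_W^{SC}]$, dropping the $[\cdot]^+$. This is justified because, conditioned on $\mathcal{G}_0$, the difference is positive. Set $B=1$, i.e.\ work in bits/s/Hz. Using \eqref{eq-links-A-R-B-2} and the normalization of Lemma~\ref{lemma-low-snr}, write $SINR_B^{SC}=\rho\,\mathcal{L}_1\beta^2N(1-\mu)|g_{ARB}|^2=\varpi_3\rho X$. Similarly, with $\sigma_W^2=\sigma_B^2$, write $SINR_W^{SC}=\varpi_4\rho |g_{ARW}|^2$. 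Each expectation then becomes $\int_0^\infty \log_2(1+c\rho x) f(x)\,dx$, where $f$ is \eqref{eq-links-A-R-B--low-pdf} or \eqref{eq-links-other-pdf}. Rewriting this as $\int_0^\infty e^{-x}\,[\log_2(1+c\rho x)f(x)e^{x}]\,dx$ and applying $K$-node Gauss--Laguerre quadrature with nodes $x_i$ and weights $w_i$ yields $\sum_i w_i\Phi_2(x_i)$ and $\sum_i w_i\Phi_3(x_i)$.

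For the CC term I would set $\varrho=0$ in \eqref{eq-Bob-sinr-cc}, so that $SINR_B^{CC}=\kappa\rho\varpi_3 X$, since $P_A=\kappa P_t$. The same Laguerre treatment then gives $\sum_i w_i\Phi_4(x_i)$. Collecting the three pieces gives \eqref{eq-er-jsacc}.

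The main obstacle is justifying the factorization of the conditional expectations. Strictly, $\mathbb{E}[R^{SC}\mid\mathcal{G}_0]$ differs from the unconditional $\mathbb{E}[R_s]$, and $X$ is correlated with the mode event. I would present this step as an approximation, not an identity, noting that the conditioning only truncates the integration region. A secondary technical point is the quadrature for $\Phi_3$: $f_{|g_{ARW}|^2}(x)e^{x}$ grows like $e^{x(1-1/N)}$. I would first try rescaling $x\to Nx$ so that the weight matches $e^{-x}$, and absorb the scaling into the nodes. Failing that, I would accept the stated form as a numerical approximation valid for moderate $K$.
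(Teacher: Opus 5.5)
Your proposal follows essentially the same route as the paper: it weights the two modes by $1-\mathbb{P}$ and $\mathbb{P}$, with $\mathbb{P}$ taken from Remark~\ref{corollary-switch-probability} and Theorem~\ref{theorem-low-snr-tra}. It then replaces $\mathbb{E}(R_s)$ by $\mathbb{E}(R_B^{SC})-\mathbb{E}(R_W^{SC})$ and evaluates each expectation by Gauss--Laguerre quadrature, exactly as the paper does. The paper also decouples the mode indicator from the in-mode rates but writes this as an identity, whereas you correctly flag it as an approximation, so your argument is at least as careful as the paper's.
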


\begin{proof}\label{proof-er-jsacc}
First, according to \eqref{eq-rate-sc}, \eqref{eq-Bob-rate-cc} and Remark \ref{corollary-switch-probability}, we know that $R_1$ is derived as
\begin{equation}\label{eq-er-proof-1}
\begin{split}
R_1 &= \left(1-\mathbb{P}\right)\mathbb E\left( R_s \right)+
\mathbb{P}\mathbb E\left( R_B^{CC} \right) 
\\
&= \left(1-\mathbb{P}\right)\left( \mathbb{E}\left(R_{B}^{SC}\right) - \mathbb{E}\left(R_{W}^{SC}\right) \right) + \mathbb{P}\mathbb E\left( R_B^{CC} \right).
\end{split}
\end{equation}
Then, according to Gauss-Laguerre, we can approximate $\mathbb{E}^{SC}_{B}$ as follows:
\begin{equation}\label{eq-er-sc_b}
\begin{split}
\mathbb{E}\left(R^{SC}_{B}\right) &= \int_{0}^{\infty} \log_{2}\left(1 + \varpi_3 x\right) f_{\left | g_{ARB} \right |^{2}}\left(x\right)dx \\
&= \sum_{i=0}^{\infty}w_{i}\log_{2}\left(1 + \varpi_3 x_i\right) f_{\left | g_{ARB} \right |^{2}}\left(x_i\right)e^{x_i} \\
&=\sum_{i=1}^{K}w_{i}\Phi_2(x_{i}) .
\end{split}
\end{equation}
Similarly, $\mathbb{E}^{SC}_{W}$ and $\mathbb{E}^{CC}_{B}$ can be addressed. 
The proof is complete.
\end{proof}

To further characterize system performance, we define the high-SNR slope as $\mathcal{S}=\lim_{\rho \to \infty} \frac{R(\rho)}{\log_2(\rho)}$.
Its evaluation relies on the asymptotic expression for Bob's ER, which is presented in the following corollary.

\begin{corollary}\label{proposition-er-jsacc}
In the high-SNR regime, the approximation of $R_1$ is given by
\begin{equation}\label{eq-high-snr-er-jsacc}
\begin{split}
R_1^{\infty } \left\{\begin{matrix}
\le \left(1-\mathbb{P}\right) J_1^{\infty} + \mathbb{P} J_2^{upper,\infty}, \\
\ge \left(1-\mathbb{P}\right) J_1^{\infty} + \mathbb{P} J_2^{lower,\infty},
\end{matrix}\right.
\end{split}
\end{equation}
where $\Phi_5(x_i)=\log_{2}\left(\varpi_3x_i\right) f_{\left | g_{ARB} \right |^{2}}\left(x_i\right)e^{x_i}$, $\Phi_6(x_i)=\log_{2}\left(\varpi_4x_i\right) f_{\left | g_{ARW} \right |^{2}}\left(x_i\right)e^{x_i}$, $\Phi_7(x_i)= \frac{f_{\left | g_{ARB} \right |^{2}}\left(x_i\right)e^{x_i}}{x_i}$, $J_1^{\infty} = \sum_{i=1}^{K} w_{i} \Phi_5(x_i) - \sum_{i=1}^{K} w_{i} \Phi_6(x_i)$, $J_2^{upper,\infty} =\log_2\left(1 + \varpi_3\kappa\rho (1+\lambda)\right)$, and $J_2^{lower,\infty} = \log_2\left(1 + \frac{\varpi_3\kappa\rho}{\sum_{i=1}^{K}w_i\Phi_7(x_i)} \right)$.
\end{corollary}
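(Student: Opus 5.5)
Starting from the decomposition \eqref{eq-er-proof-1} used in the proof of Theorem~\ref{theorem-er-jsacc}, write $R_1=(1-\mathbb{P})\left(\mathbb{E}(R_B^{SC})-\mathbb{E}(R_W^{SC})\right)+\mathbb{P}\,\mathbb{E}(R_B^{CC})$. Treat the two pieces separately, keeping $\mathbb{P}$ as the (already computed) switch probability from Remark~\ref{corollary-switch-probability}.

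\textbf{Secrecy part.} For $\rho\to\infty$ use $\log_2(1+\varpi_3\rho x)\approx \log_2(\varpi_3\rho x)=\log_2\rho+\log_2(\varpi_3 x)$, and similarly $\log_2(1+\varpi_4\rho x)\approx \log_2\rho+\log_2(\varpi_4 x)$ for Willie. The $\log_2\rho$ contributions of Bob and Willie each integrate against a probability density and therefore both equal $\log_2\rho$, so they cancel in the difference. What remains is
\[
\int_0^\infty\log_2(\varpi_3x)f_{|g_{ARB}|^2}(x)\,dx-\int_0^\infty\log_2(\varpi_4x)f_{|g_{ARW}|^2}(x)\,dx .
\]
Applying Gauss--Laguerre exactly as in \eqref{eq-er-sc_b}, with the weight $e^{-x}$ absorbed through the factor $e^{x_i}$, turns this into $\sum_i w_i\Phi_5(x_i)-\sum_i w_i\Phi_6(x_i)=J_1^\infty$. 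This term is a constant in $\rho$, which is consistent with the zero diversity order found for the SC system.

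\textbf{Covert part.} Here I would not use an asymptotic expansion. Instead I would bound $\mathbb{E}\left[\log_2(1+\varpi_3\kappa\rho X)\right]$ with Jensen's inequality in both directions, where $X=|g_{ARB}|^2\sim\mathcal{X}_1(\lambda)$ by Lemma~\ref{lemma-low-snr}.
\begin{itemize}
\item \emph{Upper bound.} Since $\log_2(1+a x)$ is concave in $x$, we get $\mathbb{E}[\log_2(1+aX)]\le\log_2(1+a\,\mathbb{E}X)$. The mean of a noncentral chi-square variable with one degree of freedom is $1+\lambda$, which gives $J_2^{upper,\infty}$.
\item \emph{Lower bound.} Write $\log_2(1+aX)=\log_2(1+e^{\ln a+\ln X})$. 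The function $t\mapsto\log_2(1+e^t)$ is convex, so $\mathbb{E}[\log_2(1+aX)]\ge\log_2(1+a e^{\mathbb{E}\ln X})$. Because $\ln x$ is concave, Jensen applied to $1/X$ gives $e^{\mathbb{E}\ln X}=e^{-\mathbb{E}\ln(1/X)}\ge 1/\mathbb{E}[1/X]$. Together these yield $\mathbb{E}[\log_2(1+aX)]\ge\log_2\left(1+a/\mathbb{E}[X^{-1}]\right)$. Evaluating $\mathbb{E}[X^{-1}]=\int_0^\infty x^{-1}f_{|g_{ARB}|^2}(x)\,dx$ by Gauss--Laguerre gives $\sum_i w_i\Phi_7(x_i)$, and hence $J_2^{lower,\infty}$.
\end{itemize}

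Combining the $(1-\mathbb{P})J_1^\infty$ term with $\mathbb{P}$ times either bound produces the two-sided statement \eqref{eq-high-snr-er-jsacc}.

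\textbf{Main obstacle.} The difficult step is the lower bound. The quantity $\mathbb{E}[X^{-1}]$ is finite only because the density of $\mathcal{X}_1(\lambda)$ behaves like $x^{-1/2}$ near zero, so $x^{-1}f(x)\sim x^{-3/2}$ there. That behaviour is in fact \emph{not} integrable at $0$, so the exact $\mathbb{E}[X^{-1}]$ diverges under the CLT model.

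I would therefore argue that $\sum_i w_i\Phi_7(x_i)$ is a finite quadrature surrogate, justified by Remark~\ref{remark-low-snr}: the CLT density is not valid near $0$. If needed, I would rigorously replace $\mathbb{E}[X^{-1}]$ by the expectation of $X^{-1}$ truncated away from $0$. An alternative is to use the small-argument law from Lemma~\ref{lemma-high-snr}, under which $f_Q(q)\sim q^{2m_sN-1}$ makes $\mathbb{E}[Q^{-2}]$ finite for $m_sN>1$, and then equate that quantity with the quadrature value.
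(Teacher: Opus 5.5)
Your argument follows the paper's own proof (Appendix~\ref{Appen-thro-1}) essentially step for step:
\begin{itemize}
\item the same split $R_1=(1-\mathbb{P})J_1+\mathbb{P}J_2$;
\item the same replacement $\log_2(1+c\rho x)\to\log_2(c\rho x)$, followed by Gauss--Laguerre, for $J_1^\infty$;
\item Jensen with $\mathbb{E}X=1+\lambda$ for $J_2^{upper,\infty}$;
\item $\log_2(1+a/\mathbb{E}[X^{-1}])$, with a Gauss--Laguerre value of $\mathbb{E}[X^{-1}]$, for $J_2^{lower,\infty}$.
\end{itemize}
What you add is a justification of the lower bound, which the paper only asserts. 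Your chain (convexity of $t\mapsto\log_2(1+e^t)$, then Jensen for $\ln$) is valid.

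Your ``main obstacle'' is a real gap, and it is also present in the paper's proof. The $\mathcal{X}_1(\lambda)$ density behaves like a positive constant times $x^{-1/2}$ as $x\to0^+$, so $\mathbb{E}[X^{-1}]=\infty$. (The opening clause of that paragraph says it is finite; fix it.) Consequences:
\begin{itemize}
\item The literal Jensen bound is $\log_2(1+0)=0$.
\item $\sum_i w_i\Phi_7(x_i)$ is a finite, $K$-dependent stand-in for a divergent integral, and it grows without bound as $K$ increases.
\item So neither argument shows that $J_2^{lower,\infty}$ as stated is a lower bound, nor that it has the unit slope later used in Corollary~\ref{corollary-er-jsacc}.
\end{itemize}

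Your proposed repairs do not close this.
\begin{itemize}
\item Taking the expectation of $X^{-1}$ only over $\{X>\epsilon\}$ shrinks the denominator and enlarges the expression, so Jensen no longer implies the inequality. At high SNR it holds exactly when the truncated moment is at least $e^{-\mathbb{E}[\ln X]}$, which brings you back to the bound below. The rigorous truncated variant is to condition on $\{X>\epsilon\}$ and accept the pre-log factor $\Pr(X>\epsilon)$.
\item Identifying a moment computed from Lemma~\ref{lemma-high-snr} with a quadrature of the CLT density has no justification.
\end{itemize}

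The clean fix is already inside your own chain: stop one step early,
\[
\mathbb{E}\left[\log_2\left(1+\varpi_3\kappa\rho X\right)\right]\ge\log_2\left(1+\varpi_3\kappa\rho\,e^{\mathbb{E}[\ln X]}\right)=\log_2\left(1+\kappa\rho\,2^{\mathbb{E}[\log_2(\varpi_3X)]}\right).
\]
This bound is finite and nontrivial because $\mathbb{E}|\ln X|<\infty$ under $\mathcal{X}_1(\lambda)$. It has unit slope in $\log_2\rho$, and its exponent is exactly the quantity $\sum_i w_i\Phi_5(x_i)$ you already compute for $J_1^\infty$. Dominated convergence gives even more: $J_2=\log_2(\kappa\rho)+\mathbb{E}[\log_2(\varpi_3X)]+o(1)$. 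So the high-SNR behaviour of $J_2$ is available exactly, not just bracketed.
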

\begin{proof}
See Appendix \ref{Appen-thro-1}.
\end{proof}

\begin{corollary}\label{corollary-er-jsacc}
In the considered JSACC system, the high-SNR slope of Bob is given by $\mathcal{S}_1 = \mathbb{P}$, which means the high-SNR slope is associated with switch probability.
\end{corollary}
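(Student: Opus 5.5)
The plan is to start from the two-sided bound in Corollary~\ref{proposition-er-jsacc} and use the definition $\mathcal{S}_1=\lim_{\rho\to\infty} R_1(\rho)/\log_2\rho$. Dividing both the upper and the lower bound in \eqref{eq-high-snr-er-jsacc} by $\log_2\rho$, it then suffices to show that both ratios tend to $\mathbb{P}$; the squeeze argument then gives the claim.

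\textbf{Step 1: the SC term.} In the high-SNR regime the term $J_1^{\infty}$ is a difference of two Gauss--Laguerre sums, $\sum_i w_i\Phi_5(x_i)-\sum_i w_i\Phi_6(x_i)$. Neither $\Phi_5$ nor $\Phi_6$ contains $\rho$. This reflects that the secrecy rate $\log_2\frac{1+\varpi_3\rho X}{1+\varpi_4\rho Y}$ tends to $\log_2\frac{\varpi_3 X}{\varpi_4 Y}$, so the $\log_2\rho$ contributions cancel. Hence $J_1^{\infty}=O(1)$ and $(1-\mathbb{P})J_1^{\infty}/\log_2\rho\to 0$. Its prefactor $1-\mathbb{P}$ is bounded, and I will also handle the $\rho$-dependence of $\mathbb{P}$ in Step 3.

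\textbf{Step 2: the CC term.} For the upper bound, write
\[
J_2^{upper,\infty}=\log_2\rho+\log_2\!\bigl(\rho^{-1}+\varpi_3\kappa(1+\lambda)\bigr),
\]
whose second term converges to the constant $\log_2(\varpi_3\kappa(1+\lambda))$, so $J_2^{upper,\infty}/\log_2\rho\to1$. For the lower bound the same manipulation applies, with $\varpi_3\kappa/\sum_i w_i\Phi_7(x_i)$ in place of $\varpi_3\kappa(1+\lambda)$. Here I need $\sum_i w_i\Phi_7(x_i)$ to be a finite positive constant independent of $\rho$. This holds because it is a finite quadrature sum of positive terms, since $f_{|g_{ARB}|^2}>0$ at the Laguerre nodes. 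Thus both bounds on the CC part behave like $\mathbb{P}\log_2\rho+O(1)$.

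\textbf{Step 3: the switch probability.} The remaining subtlety, and the main obstacle, is that $\mathbb{P}$ itself depends on $\rho$ through $\frac{\varpi_1}{2\rho}$ in its incomplete-gamma argument. I will invoke Corollary~\ref{proposition-diversity-order-tra}, which shows that $\mathbb{P}$ converges to the $\rho$-independent constant $\mathbb{P}_2^{\infty}$ (in line with the zero diversity order of Corollary~\ref{corollary-diversity-order-tra}). The difference is $O(\rho^{-1})$ by the first-order Taylor expansion in \eqref{eq-tra-high-snr-op-proof-1}. Since $\rho^{-1}\log_2\rho\to0$, replacing $\mathbb{P}$ by its limit does not change the slope.

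Combining Steps 1--3, both bounds divided by $\log_2\rho$ converge to $\mathbb{P}$, with $\mathbb{P}$ understood as its high-SNR value $\mathbb{P}_2^{\infty}$. This yields $\mathcal{S}_1=\mathbb{P}$: only the covert-mode branch, active with the switch probability, contributes a rate that grows logarithmically in $\rho$.
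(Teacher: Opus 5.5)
Your argument is correct and is essentially the paper's proof: the same split $R_1=(1-\mathbb{P})J_1+\mathbb{P}J_2$, the $\rho$-free constant $J_1^{\infty}$, and the sandwich of $J_2$ between the two unit-slope bounds of Corollary~\ref{proposition-er-jsacc}, but done more carefully, since you squeeze $R_1/\log_2\rho$ via the stated limit definition instead of differentiating through inequalities (bounds on a function do not bound its derivative), and you handle the $\rho$-dependence of $\mathbb{P}$, which the paper pulls out of the derivative as if it were constant. The one weak point you inherit is that under the one-degree-of-freedom noncentral chi-square model the true $\mathbb{E}(1/X)$ is infinite (the density behaves like $x^{-1/2}$ near $0$), so $J_2^{lower,\infty}$ is only a quadrature surrogate, not a proven bound; the elementary bound $J_2\ge\log_2(\varpi_3\kappa\rho)+\mathbb{E}\log_2 X$, whose last term is finite, would make your squeeze fully rigorous.
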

\begin{proof}
We have $\mathcal{S}_1 = \frac{dR_1^{\infty}}{d\log_2(\rho)}=(1-\mathbb{P})\frac{dJ_1^{\infty}}{d\log_2(\rho)}+\mathbb{P}\frac{dJ_2^{\infty}}{d\log_2(\rho)}$. Since $J_1^{\infty}$ is a constant, we can obtain $\frac{dJ_1^{\infty}}{d\log_2(\rho)} = 0$.
On the other hand, we know $J_2^{lower,\infty} \le J_2^{\infty} \le J_2^{upper,\infty}$. According to $\frac{dJ_2^{lower,\infty}}{d\log_2(\rho)} = \frac{dJ_2^{upper,\infty}}{d\log_2(\rho)} = 1$, we can obtain $\frac{dJ_2^{\infty}}{d\log_2(\rho)} = 1$.
This completes the proof.
\end{proof}

\subsubsection{Security Rate}
The security rate of Bob under adaptive-rate communications is given by 
\begin{equation}
    \begin{split}
        \vartheta_{adaptive} =  1 - \frac{\mathbb{P}\eta_2 \log\left( 1 + \frac{\eta_1}{\eta_2} \right)}{\eta_1}.
    \end{split}
\end{equation}

\subsection{The Conventional SC System}  
\begin{theorem}\label{theorem-er-benchmark}
In the conventional SC system, the ER of Bob is given by
\begin{equation}\label{eq-er-benchmark}
\begin{split}
R_2 = \sum_{i=1}^{K}w_{i}\left(\Phi_2(x_i)-\Phi_3(x_i)\right).
\end{split}
\end{equation}
\end{theorem}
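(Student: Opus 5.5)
The plan mirrors the proof of Theorem~\ref{theorem-er-jsacc}, keeping only its secrecy-mode part. In the conventional SC system there is no switching: the system stays in SC mode even when the secrecy rate falls below the threshold. Hence Bob's ER is just the expectation of the secrecy capacity \eqref{eq-rate-sc}. Following the same manipulation as in \eqref{eq-er-proof-1}, I would write
\begin{equation*}
R_2=\mathbb{E}\left(R_s\right)\approx\mathbb{E}\left(R_B^{SC}\right)-\mathbb{E}\left(R_W^{SC}\right).
\end{equation*}
This uses linearity of expectation after dropping the $[\cdot]^{+}$ operator, exactly as the paper does for the $(1-\mathbb{P})$ term of $R_1$. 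Formally, this is the special case of \eqref{eq-er-jsacc} with the switch probability set to $\mathbb{P}=0$.

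Next I would evaluate each expectation separately. For Bob, I substitute the optimal phase configuration \eqref{eq-links-A-R-B-2} into \eqref{eq-Bob-sinr-sc}. This expresses $SINR_B^{SC}$ through $|g_{ARB}|^2$ with coefficient $\varpi_3\rho$, where $\rho=P_t/\sigma_B^2$ and $\varpi_3=\mathcal{L}_1\beta^2N(1-\mu)$. Averaging over the noncentral chi-square PDF of Lemma~\ref{lemma-low-snr} gives
\begin{equation*}
\mathbb{E}\left(R_B^{SC}\right)=\int_0^\infty \log_2\left(1+\varpi_3\rho x\right)f_{|g_{ARB}|^2}(x)\,dx .
\end{equation*}
I then insert the factor $e^{-x}e^{x}$ and apply $K$-point Gauss--Laguerre quadrature with nodes $x_i$ and weights $w_i$. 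This yields $\sum_{i=1}^K w_i\Phi_2(x_i)$, as in \eqref{eq-er-sc_b}.

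For Willie, the RIS phases are random with respect to his link, so $|g_{ARW}|^2$ has the exponential PDF \eqref{eq-links-other-pdf}, and $SINR_W^{SC}$ scales with coefficient $\varpi_4\rho=\mathcal{L}_2\beta^2\rho$. The same quadrature then gives $\sum_{i=1}^K w_i\Phi_3(x_i)$. Subtracting the two sums produces \eqref{eq-er-benchmark}.

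The main obstacle is justifying that $\mathbb{E}[R_s]$ can be replaced by the difference of the two expectations. The expression $[\cdot]^{+}$ is not linear, so this difference is really a lower bound on the ER, and it is tight only when Bob's rate exceeds Willie's with high probability. I would handle this the same way Theorem~\ref{theorem-er-jsacc} does, stating it as an approximation. A secondary point is that the CLT-based PDF of $|g_{ARB}|^2$ is inaccurate near $x\to0^+$ (Remark~\ref{remark-low-snr}). This affects the quadrature only mildly, because the integrand $\log_2(1+\varpi_3\rho x)$ vanishes at $x=0$.
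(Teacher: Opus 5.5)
Your proposal is correct and follows the paper's route. The paper proves this by deferring to the proof of Theorem~\ref{theorem-er-jsacc}, keeping only the secrecy-mode term, writing $\mathbb{E}(R_s)$ as $\mathbb{E}(R_B^{SC})-\mathbb{E}(R_W^{SC})$, and evaluating each expectation by Gauss--Laguerre quadrature exactly as you do. Your remark that dropping $[\cdot]^{+}$ gives only a lower bound on $\mathbb{E}(R_s)$ (since $[a]^{+}\ge a$) is accurate, and the paper leaves that caveat implicit.
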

\begin{proof}
Similar to the proof of Theorem \ref{theorem-er-jsacc}.
\end{proof}

\begin{corollary}\label{proposition-benchmark-er}
In the high-SNR regime, the approximation of $R_2$ is given by
\begin{equation}\label{eq-er-benchmark-high-snr}
\begin{split}
R_2^{\infty} = J_1^{\infty}.
\end{split}
\end{equation}
\end{corollary}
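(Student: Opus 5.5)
To prove $R_2^{\infty}=J_1^{\infty}$, I will start from Theorem~\ref{theorem-er-benchmark}. It gives $R_2=\mathbb{E}(R_B^{SC})-\mathbb{E}(R_W^{SC})$, and after the Gauss--Laguerre quadrature this becomes $\sum_{i=1}^{K}w_i(\Phi_2(x_i)-\Phi_3(x_i))$. The plan is to take $\rho\to\infty$ separately in each expectation and then take the difference. This is the same reasoning that produced the $J_1^{\infty}$ term in Corollary~\ref{proposition-er-jsacc}, which I will mirror, since that term is exactly the SC part of $R_1^{\infty}$.

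The first step uses the high-SNR approximation $\log_2(1+a\rho x)\approx\log_2(a\rho x)=\log_2\rho+\log_2(a x)$, applied inside both $\Phi_2$ (with $a=\varpi_3$) and $\Phi_3$ (with $a=\varpi_4$). This splits each quadrature sum into two parts:
\[
\sum_{i}w_i\Phi_2(x_i)\approx \log_2\rho\sum_i w_i f_{|g_{ARB}|^2}(x_i)e^{x_i}+\sum_i w_i\Phi_5(x_i).
\]
Here I read $\Phi_5,\Phi_6$ as including the same factor $\rho$ convention as in Corollary~\ref{proposition-er-jsacc}, i.e., with the $\rho$-dependence pulled out. The analogous split holds for $\Phi_3$, with $f_{|g_{ARW}|^2}$ in the first part and $\Phi_6$ in the second.

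The second step is to observe that $\sum_i w_i f(x_i)e^{x_i}$ is the Gauss--Laguerre approximation of $\int_0^\infty f(x)\,dx=1$ for both PDFs. Both $\log_2\rho$ terms therefore carry weight $1$, and they cancel exactly in the difference. What remains is $\sum_i w_i\Phi_5(x_i)-\sum_i w_i\Phi_6(x_i)=J_1^{\infty}$, which is independent of $\rho$. As a consequence, the high-SNR slope of the conventional SC system is zero.

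The main obstacle is making the cancellation of the $\log_2\rho$ terms rigorous, because the quadrature weights of the two densities are only approximately equal to $1$. I would handle this by arguing at the level of the exact integrals rather than the quadrature sums. First, $\mathbb{E}[\log_2(1+a\rho X)]-\log_2\rho-\mathbb{E}[\log_2(aX)]\to 0$ by dominated convergence, since $\mathbb{E}|\log X|<\infty$ for both the noncentral chi-square and the exponential laws. Second, I would apply the quadrature only to the finite remainders $\mathbb{E}[\log_2(\varpi_3X)]$ and $\mathbb{E}[\log_2(\varpi_4Y)]$, which by definition give $J_1^{\infty}$.
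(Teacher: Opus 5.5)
Your proposal is correct and follows essentially the paper's route. The paper's proof defers to its treatment of $J_1$ in Corollary~\ref{proposition-er-jsacc}: replace $\log_2(1+a\rho x)$ by its high-SNR form so that the $\rho$-dependence cancels between Bob's and Willie's terms, then evaluate the remaining $\rho$-free expectations $\mathbb{E}[\log_2(\varpi_3 X)]$ and $\mathbb{E}[\log_2(\varpi_4 Y)]$ by Gauss--Laguerre quadrature to obtain $\Phi_5$ and $\Phi_6$. The paper cancels $\rho$ inside the ratio $\frac{1+\varpi_3 x\rho}{1+\varpi_4 y\rho}$ before integrating, whereas you split first and cancel the $\log_2\rho$ terms using the normalization of the two densities; this difference is cosmetic, and your dominated-convergence step adds rigor that the paper omits.
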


\begin{proof}
Similar to the proof of Corollary~\ref{proposition-er-jsacc}.
\end{proof}

\begin{corollary}\label{corollary-er-sc}
In the conventional SC system, the high-SNR slope of Bob is given by $\mathcal{S}_2 = 0$.
\end{corollary}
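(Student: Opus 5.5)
Corollary~\ref{corollary-er-sc} follows by mirroring the proof of Corollary~\ref{corollary-er-jsacc}, with the switching term removed. By the definition $\mathcal{S}=\lim_{\rho \to \infty} \frac{R(\rho)}{\log_2(\rho)}$, it suffices to show that $R_2$ stays bounded as $\rho\to\infty$. Equivalently, we show that its high-SNR approximation $R_2^{\infty}$ from Corollary~\ref{proposition-benchmark-er} does not depend on $\rho$.

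First, I will justify $R_2^{\infty}=J_1^{\infty}$ directly from Theorem~\ref{theorem-er-benchmark}. The secrecy rate is the difference $\mathbb{E}(R_B^{SC})-\mathbb{E}(R_W^{SC})$, and both terms scale with the same SNR $\rho=P_t/\sigma_B^2$, taking $\sigma_W^2$ proportional to $\sigma_B^2$ as is implicit in the paper. For large $\rho$ we have $\log_2(1+\varpi_3\rho x)=\log_2\rho+\log_2(\varpi_3 x)+o(1)$, and likewise for the Willie term with $\varpi_4$. Substituting these into $\Phi_2$ and $\Phi_3$, the $\log_2\rho$ contributions become $\log_2\rho\cdot\sum_i w_i f_{|g_{ARB}|^2}(x_i)e^{x_i}$ and $\log_2\rho\cdot\sum_i w_i f_{|g_{ARW}|^2}(x_i)e^{x_i}$. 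Each sum is the Gauss--Laguerre approximation of the integral of a PDF, which equals $1$. The two $\log_2\rho$ terms therefore cancel. What remains is $\sum_i w_i\Phi_5(x_i)-\sum_i w_i\Phi_6(x_i)=J_1^{\infty}$, which is a constant independent of $\rho$.

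Second, I conclude in the same way as the JSACC slope proof: $\mathcal{S}_2=\frac{dR_2^{\infty}}{d\log_2\rho}=\frac{dJ_1^{\infty}}{d\log_2\rho}=0$. Equivalently, $R_2/\log_2\rho\to J_1^{\infty}/\log_2\rho\to 0$.

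The main obstacle is making the cancellation rigorous rather than relying on the quadrature. The $o(1)$ remainders and the exact unit normalization hold for the true integrals, not for their quadrature approximations. I would therefore argue at the level of expectations. Write
\[
R_B^{SC}-R_W^{SC}=B\log_2\frac{1+c_1\rho X}{1+c_2\rho Y},
\]
where $c_1,c_2>0$ are constants and $X,Y$ are the channel gains. As $\rho\to\infty$, this converges pointwise to $B\log_2\frac{c_1X}{c_2Y}$. Both $\mathbb{E}|\log X|$ and $\mathbb{E}|\log Y|$ are finite for the exponential and noncentral chi-square laws, and $\left|\log\frac{1+aX}{1+bY}\right|$ is dominated by $|\log(aX/bY)|$ plus a constant. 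Dominated convergence then gives a finite limit, namely $J_1^{\infty}$. Consequently the slope is zero, and the same holds after applying $[\cdot]^+$.
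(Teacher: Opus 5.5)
Your proposal is correct and follows the route the paper implicitly takes. Corollary~\ref{proposition-benchmark-er} gives $R_2^{\infty}=J_1^{\infty}$, which does not depend on $\rho$, so $\mathcal{S}_2=\frac{dJ_1^{\infty}}{d\log_2(\rho)}=0$ exactly as in the proof of Corollary~\ref{corollary-er-jsacc}; your dominated-convergence step adds rigor the paper omits, and the bound is even cleaner than you state: since $t\mapsto\log(1+e^{t})$ is $1$-Lipschitz and $a/b=c_1/c_2$ does not depend on $\rho$, you get $\bigl|\log\frac{1+aX}{1+bY}\bigr|\le\bigl|\log\frac{c_1X}{c_2Y}\bigr|$ with no additive constant, which already bounds $R_2$ uniformly in $\rho$, with or without $[\cdot]^{+}$.
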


\begin{table*}
\centering
\caption{Parameters setting}
\begin{tabular}{|c|c|}
\hline
Bandwidth & $B=1$ MHz\\
\hline
Amplitude-reflection coefficient of RIS & $\beta=0.9$\\
\hline
Distances & $d_{AR}=d_{JR}=100m$ and $d_{RB}=d_{RW}=50m$\\
\hline
Path-loss exponents & $\alpha_{AR}=\alpha_{JR}=3$ and $\alpha_{RB}=\alpha_{RW}=1.5$\\
\hline
Nakagami fading parameters& $m_{AR}=m_{JR}=3$ and $m_{RB}=m_{RW}=1.5$\\
\hline
Target data-rates & $\bar{R} = 0.1\ Mbps$\\
\hline
Number of points for Gauss-Laguerre and Chebyshev-Gauss quadratures& $K=100$ and $M=100$ \\
\hline
\end{tabular}\label{Table-1}
\end{table*}

\section{Numerical Result}\label{section-5}
In this section, we present numerical results to evaluate the performance of the JSACC and the conventional SC systems.
The accuracy of analytical results is further validated through Monte Carlo simulations.
The parameter configurations are detailed in Table~\ref{Table-1} \cite{ding2020simple,ding2020impact}.

\subsection{Average Minimum Detection Error Probability Under the CC Mode}
\begin{figure}
\centering
\includegraphics[width=\linewidth]{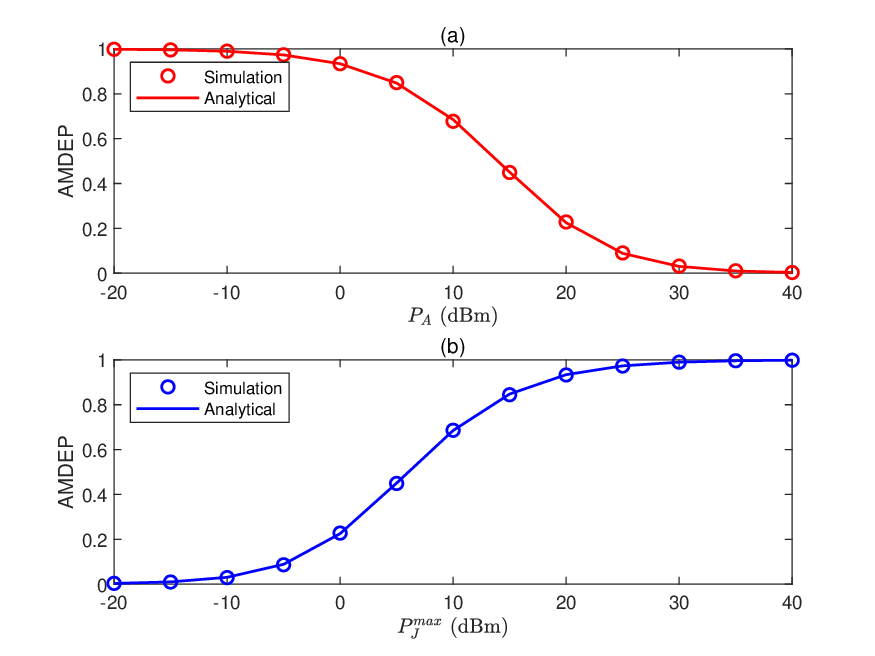}
\caption{AMDEP versus Alice's and Jammer's transmit power.}
\label{fig-amdep}
\end{figure}

\begin{figure}
\centering
\includegraphics[width=\linewidth]{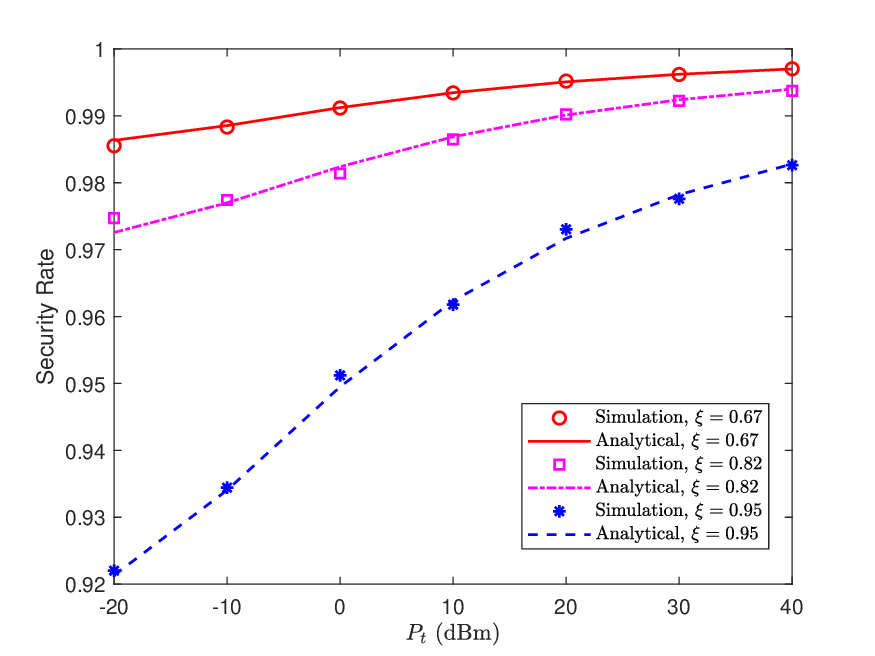}
\caption{Security rate versus $P_t$.}
\label{fig-demo}
\end{figure}

Fig.~\ref{fig-amdep}(a) depicts the AMDEP as a function of Alice's transmit power with the Jammer's power budget fixed at 10 dBm, whereas Fig.~\ref{fig-amdep}(b) presents the AMDEP versus the Jammer's maximum transmit power under a fixed Alice power constraint of 10 dBm.
As observed, AMDEP monotonically decreases with $P_A$ and asymptotically approaches zero as $P_A \to \infty$, while it increases with $P_J^{max}$ and approaches unity as $P_J^{max} \to \infty$.
These phenomena are consistent with intuition, since a larger Jammer's power leads to higher uncertainty for Willie.
In addition, the results indicate that the number of RIS elements has a negligible effect on the AMDEP, highlighting that the principal determinants of covertness performance are the powers transmitted by the Alice and Jammer.

In Fig. \ref{fig-demo}, we plot the security rate versus the transmit power for different power allocation factors $\xi$, where $\xi = \frac{P_A}{P_A + \bar{P_J}}$.
The close correspondence between simulation and analytical results verifies the accuracy of the theoretical model.
It is observed that the security rate gradually increases as $P_t$ increases.
Moreover, a smaller $\xi$ leads to a higher security rate, which means better covertness.
On the other hand, we know that a smaller $\xi$ leads to worse outage performance.
Therefore, the trage-off between outage performance and covertness performance is worthy of investigation in the future.

\subsection{Fixed-Rate Communications}
\begin{figure}
\centering
\includegraphics[width=\linewidth]{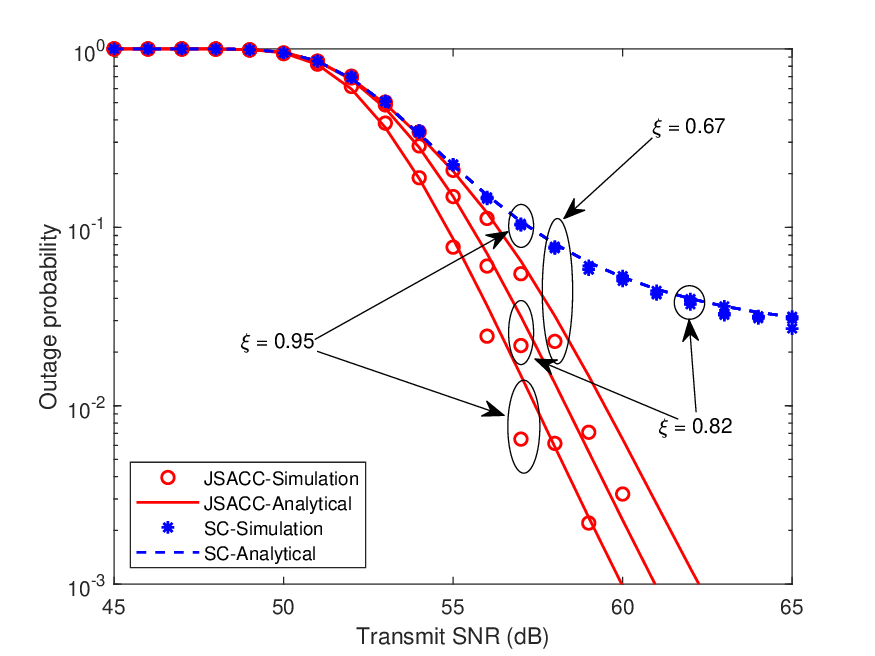}
\caption{OPs derived from the CLT-based channel statistics with $N = 8$.}
\label{fig-low-snr}
\end{figure}

In Fig.~\ref{fig-low-snr}, we plot the OPs versus the transmit SNR in the JSACC and conventional SC systems, where conventional SC is regarded as the benchmark.
Initially, the conventional SC analysis is confirmed to be accurate, since the simulation results are consistent with the analytical expressions given in \eqref{eq-tra-low-snr-op}.
For JSACC, the analysis derived from \eqref{eq-low-snr-op} accurately captures the low-SNR regime. 
Still, it exhibits deviations in the high-SNR regime, which is attributed to the CLT-based approximation of channel statistics.
We observe that the JSACC system consistently outperforms the conventional SC in the low-SNR regime.

\begin{figure}
\centering
\includegraphics[width=\linewidth]{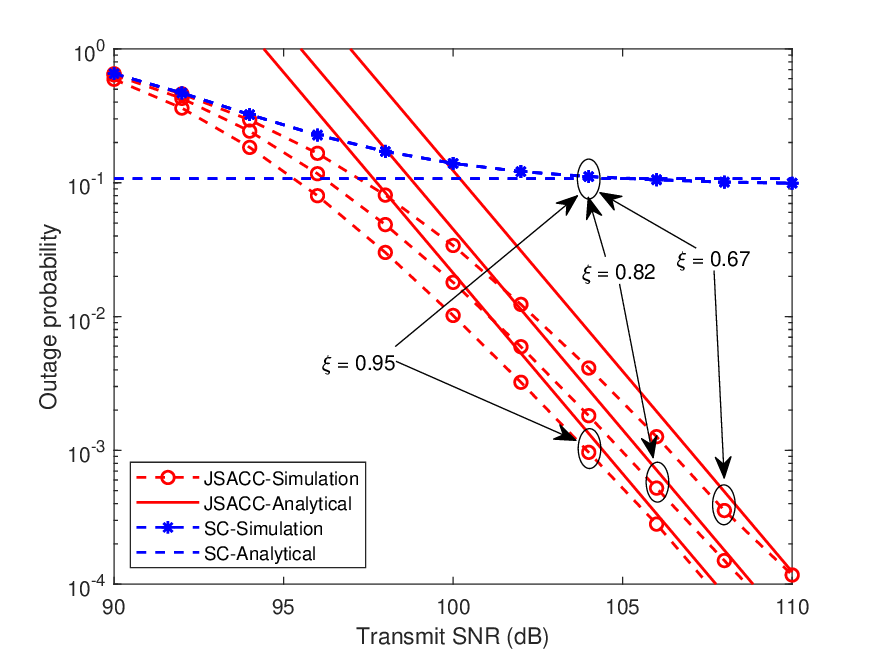}
\caption{High-SNR OPs of Bob in the JSACC and conventional SC when $N = 2$.}
\label{fig-high-snr}
\end{figure}

As the theoretical characterization of the JSACC system becomes less accurate in the high-SNR region shown in Fig~\ref{fig-high-snr}.
As shown, the OPs of both JSACC and conventional SC asymptotically approach the curves derived from \eqref{eq-high-snr-op} and \eqref{eq-tra-high-snr-op}, validating the proposed analysis.
Additionally, by observing slops, we obtain that the diversity order under the JSACC system is $m_sN$ when $N=2$.
For conventional SC, the diversity order equals zero for $N=2$, which is consistent with Corollary~\ref{corollary-diversity-order} and Corollary~\ref{corollary-diversity-order-tra}.
Hence, the results indicate a substantial performance improvement of JSACC over conventional SC.

\subsection{Adaptive-Rate Communications}
We plot the ERs versus the transmit SNR in the JSACC and benchmark SC systems in Fig. \ref{fig-er}, where the conventional SC system is regarded as a benchmark.
Initially, the simulation points for the JSACC and conventional SC systems are seen to align well with the corresponding analytical results derived from \eqref{eq-er-jsacc} and \eqref{eq-er-benchmark}.
It is observed that the ER performance of JSACC is comparable to that of conventional SC at low SNR levels. 
However, as the SNR increases slightly, the performance of JSACC has a higher ER than in the conventional SC, which demonstrates the superiority of JSACC.

Furthermore, the validity of the asymptotic analysis is substantiated through the numerical results illustrated in Fig.~\ref{fig-high-er}.
As shown, the high-SNR approximations derived in \eqref{eq-high-snr-er-jsacc} and \eqref{eq-er-benchmark-high-snr} closely match the simulation results.
In the high-SNR regime, the analysis further reveals that the JSACC system achieves a slope of $\mathbb{P}$. 
In contrast, the conventional SC system attains a slope of $0$, which is consistent with Corollary~\ref{corollary-er-jsacc} and Corollary~\ref{corollary-er-sc}.
Lastly, we obtain that the ER of the JSACC system continues to increase with increasing transmit SNR.
For the conventional SC system, the ER converges to a ceiling, which reveals the advantages of the JSACC system.

\begin{figure}
\centering
\includegraphics[width=\linewidth]{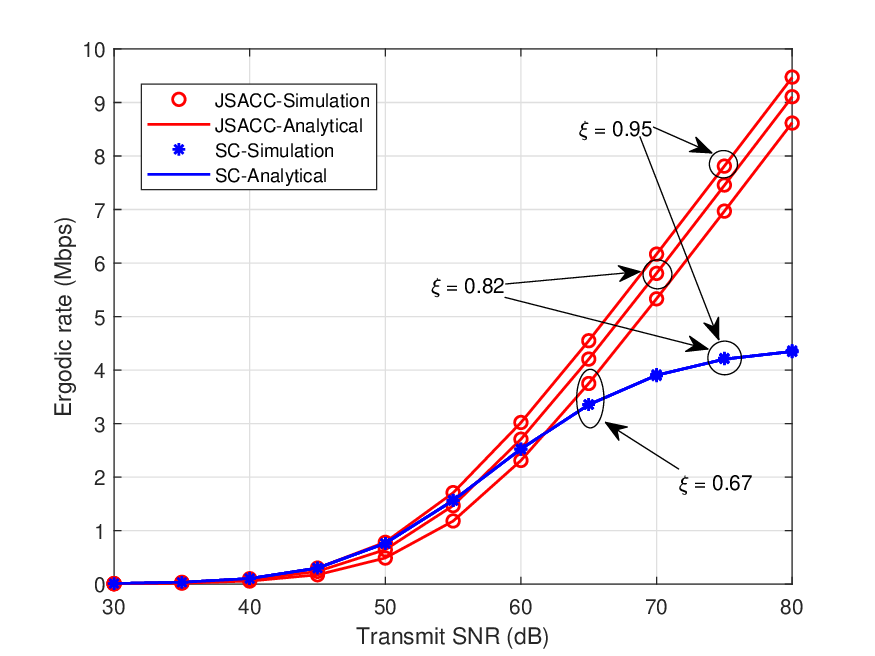}
\caption{ERs versus the transmit SNR in the JSACC and conventional SC systems when $N = 16$.}
\label{fig-er}
\end{figure}

\begin{figure}
\centering
\includegraphics[width=\linewidth]{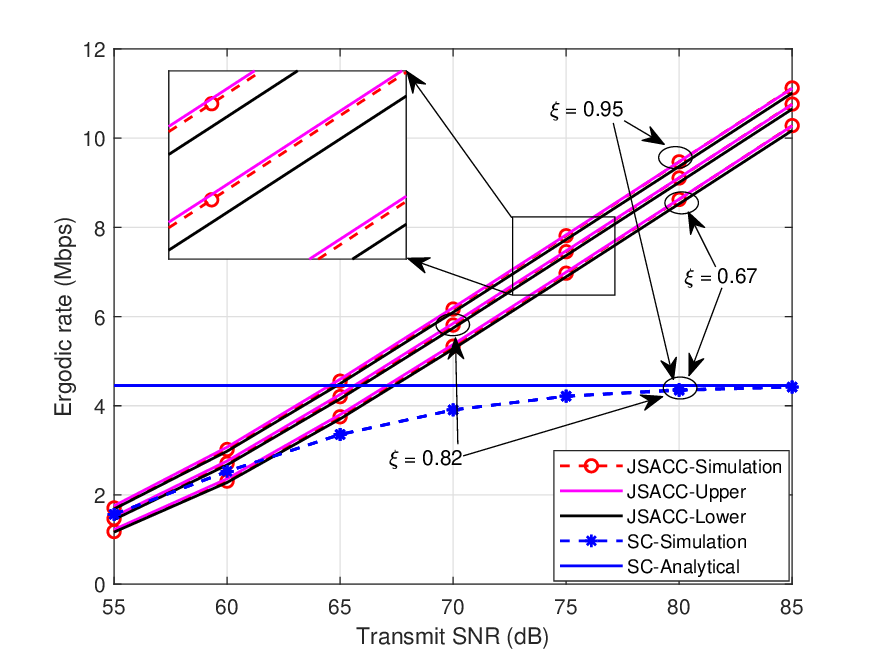}
\caption{High-SNR approximations of ERs.}
\label{fig-high-er}
\end{figure}

\begin{figure}
    \centering
    \includegraphics[width=\linewidth]{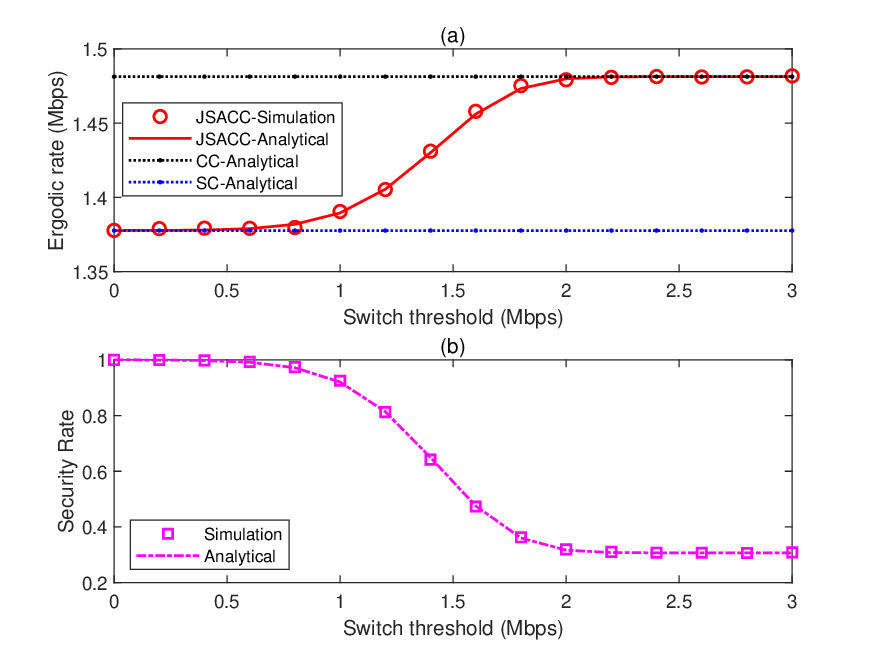}
    \caption{ERs and Security Rate versus switch threshold.}
    \label{fig:placeholder}
\end{figure}

In Fig.~\ref{fig:placeholder}(a), we plot the ERs versus the switch threshold in the JSACC system.
We observe that the ER of JSACC gradually increases as the switch threshold increases.
Furthermore, we plot the security rate versus the switch threshold in the JSACC system in Fig.~\ref{fig:placeholder}(b).
we can observe that the security rate gradually decreases as the switch threshold increases.
It shows that as the switch threshold increases, the capacity performance improves, but the security performance decreases.
The trade-off between capacity performance and security performance is an interesting task for future work.

\section{Conclusion}\label{section-6}
In this paper, we have proposed a new PLS security paradigm named JSACC, which jointly exploits SC and CC. 
By dynamically switching between secrecy and covert modes according to the legitimate user's reception quality, JSACC enhances both spectral efficiency and security performance. 
By deriving the closed-form expressions for OPs and ERs under both scenarios, and analyzing the corresponding diversity orders and high-SNR slopes, we have found that JSACC significantly outperforms the conventional SC system in terms of transmission reliability and secrecy performance.
For future work, adaptive power allocation stands out as a promising approach to further enhance the system's performance.

\begin{appendices}
\section{Proof of Theorem \ref{theorem-amdep}}\label{Appen-thro-2}
\renewcommand{\theequation}{\thesection.\arabic{equation}}
\setcounter{equation}{0}

By leveraging the statistical CSI of all links, the AMDEP of Willie in the CC mode can be derived by
\begin{equation}\label{eq-amdep-proof-1}
\begin{split}
\varsigma^*_c =& \Pr\left( \tau_2 \le \tau_3 \right)\mathbb{E}\left( \varsigma^*|\tau_2 \le \tau_3 \right) \\ & + \Pr\left( \tau_2 > \tau_3 \right)\mathbb{E}\left( \varsigma^*|\tau_2 > \tau_3 \right)
\\ &
= \Pr\left( \tau_2 > \tau_3 \right)\mathbb{E}\left( \varsigma^*|\tau_2 > \tau_3 \right).
\end{split}
\end{equation}
According to \eqref{eq-links-other-pdf} and \eqref{eq-opt-dep}, we can derive by 
\begin{equation}\label{eq-amdep-proof-2}
\begin{split}
\varsigma^*_c = \int_{0}^{\infty} \int_{0}^{\frac{\eta_1 x}{\eta_2}} \frac{e^{\frac{-\left(x+y\right)}{N}}}{N^2}\left(1-\frac{\eta_2 y}{\eta_1 x}\right) dy dx.
\end{split}
\end{equation}
Then, we can derive  $\varsigma^*_c$ by
\begin{equation}\label{eq-amdep-cc}
    \begin{split}
        \varsigma^*_c = 1 - \frac{\eta_2 \log\left( 1 + \frac{\eta_1}{\eta_2} \right)}{\eta_1},
    \end{split}
\end{equation}
Thus, according to \eqref{eq-amdep-cc} and \eqref{eq-tra-low-snr-op}, we obtain AMEDP for the JSACC system is given by
\begin{equation}
    \begin{split}
        \varsigma^*_a = \mathbb{P}_2 \varsigma^*_c,
    \end{split}
\end{equation}
The security rate for the JSACC system can be derived by
\begin{equation}
    \begin{split}
        \vartheta = 1-\mathbb{P}_2 + \mathbb{P}_2 \varsigma^*_a.
    \end{split}
\end{equation}
This completes the proof.

\section{Proof of Corollary \ref{proposition-er-jsacc}}\label{Appen-thro-1}
\renewcommand{\theequation}{\thesection.\arabic{equation}}
\setcounter{equation}{0}

According to \eqref{eq-er-proof-1}, we know that
\begin{equation}\label{eq-high-snr-er-jsacc-proof-1}
\begin{split}
R_1 &= \left(1-\mathbb{P}\right)\underbrace{\int_{0}^{\infty}\int_{0}^{\infty}\log_2\left(\frac{1+\varpi_3x\rho}{1+\varpi_4y\rho}\right)f(x)f(y)dxdy}_{J_1} \\ &+ \mathbb{P}\underbrace{\int_{0}^{\infty}\log_2\left(1+\varpi_3x\kappa\rho\right)f(x)dx}_{J_2}.
\end{split}
\end{equation}
Then, we can analyze $J_1$ and $J_2$.
When $\rho \to \infty$, we can obtain $\lim_{\rho \to \infty} \log_2\left(\frac{1+\varpi_3x\rho}{1+\varpi_4y\rho}\right) \approx \log_2\left(\frac{\varpi_3x}{\varpi_4y}\right)$, $J_1^{\infty}$ can be driven by
\begin{equation}\label{eq-high-snr-er-jsacc-proof-2}
\begin{split}
J_1^{\infty} &=
\int_{0}^{\infty}\int_{0}^{\infty}\log_2\left(\frac{\varpi_3x}{\varpi_4y}\right)f(x)f(y)dxdy
\\& = \int_{0}^{\infty}\log_2\left(\varpi_3x\right)f(x)dx 
\\ & -\int_{0}^{\infty}\log_2\left(\varpi_4y\right)f(y)dy.
\end{split}
\end{equation}
Next, $J_1^{\infty}$ can be estimated through the Gauss-Laguerre quadrature approach, and we have
\begin{equation}\label{eq-high-snr-er-jsacc-proof-3}
\begin{split}
J_1^{\infty} \simeq \sum_{i=1}^{K} w_{i} \Phi_5(x_i) - \sum_{i=1}^{K} w_{i} \Phi_6(x_i).
\end{split}
\end{equation}

On the other hand, by exploiting the properties of the noncentral chi-square distribution, we obtain $\mathbb{E}(x) = 1 + \lambda$.
Then, as $\rho \to \infty$, Jensen's inequality \cite{liu2017non} can be applied to derive an upper bound for $J_2$, we have
\begin{equation}\label{eq-high-snr-er-jsacc-proof-4}
\begin{split}
J_2^{upper,\infty} &= \log_2\left(1 + \varpi_3\kappa\rho \mathbb{E}(x)\right) =\log_2\left(1 + \varpi_3\kappa\rho (1+\lambda)\right) \\
& \ge
\mathbb{E}\left(\log_2\left(1 + \varpi_3\kappa\rho x\right)\right).
\end{split}
\end{equation}
Next, the lower bound of $J_2$ is given by
\begin{equation}\label{eq-high-snr-er-jsacc-proof-5}
\begin{split}
J_2^{lower,\infty} = \log_2\left(1 + \frac{\varpi_3\kappa\rho}{\mathbb{E}(\frac{1}{x})} \right)
\le
\mathbb{E}\left(\log_2\left(1 + \varpi_3\kappa\rho x\right)\right).
\end{split}
\end{equation}
$\mathbb{E}(\frac{1}{x})$ can similarly be approximated using the Gauss-Laguerre quadrature, we have $\mathbb{E}(\frac{1}{x}) = \sum_{i=1}^{K}w_i\Phi_7(x_i)$.
This completes the proof.
\end{appendices}

\bibliographystyle{IEEEtran}
\bibliography{ref.bib}

\end{document}